\def\c{\mathbf{c}}
\def\C{\mathcal{C}}
\def\etal{\textit{et al.}~}
\newtheorem{thm}{Theorem}
\newtheorem{lem}{Lemma}
\newtheorem{cor}{Corollary}
\newcommand{\argmin}{\mathop{\mathrm{arg~min}}\limits}
\begin{document}

\title{Quantum Search Algorithm for Binary Constant Weight Codes}

\author{Kein~Yukiyoshi and Naoki~Ishikawa,~\IEEEmembership{Senior~Member,~IEEE}.\thanks{K.~Yukiyoshi and N.~Ishikawa are with the Faculty of Engineering, Yokohama National University, 240-8501 Kanagawa, Japan (e-mail: ishikawa-naoki-fr@ynu.ac.jp). This research was partially supported by the Japan Society for the Promotion of Science KAKENHI (Grant Number JP22H01484).}}

\markboth{\today}
{Shell \MakeLowercase{\textit{et al.}}: Bare Demo of IEEEtran.cls for Journals}
\maketitle

\begin{abstract}
A binary constant weight code is a type of error-correcting code with a wide range of applications. The problem of finding a binary constant weight code has long been studied as a combinatorial optimization problem in coding theory. In this paper, we propose a quantum search algorithm for binary constant weight codes. Specifically, the search problem is newly formulated as a quadratic unconstrained binary optimization (QUBO) and Grover adaptive search (GAS) is used for providing the quadratic speedup. Focusing on the inherent structure of the problem, we derive an upper bound on the minimum of the objective function value and a lower bound on the exact number of solutions. In our algebraic analysis, it was found that this proposed algorithm is capable of reducing the number of required qubits, thus enhancing the feasibility. Additionally, our simulations demonstrated that it reduces the query complexities by $63\%$ in the classical domain and $31\%$ in the quantum domain. The proposed approach may be useful for other quantum search algorithms and optimization problems.
\end{abstract}

\begin{IEEEkeywords}
Quantum computing, Grover adaptive search (GAS), combinatorial optimization, constant weight code, quadratic unconstrained binary optimization (QUBO).
\end{IEEEkeywords}
\IEEEpeerreviewmaketitle


\section{Introduction}
\IEEEPARstart{A} binary constant weight code is an error-correcting code, defined such that all codewords have the same Hamming weight \cite{brouwer1990new}. The codes have a wide range of applications, including the design of very large-scale integration systems \cite{tallini1998design}, the frequency assignment problem \cite{moon2005assignment}, the design of optical orthogonal codes \cite{chung1989optical}, error-tolerant key generation from noisy data \cite{dodis2008fuzzy}, inter-cell interference mitigation for flash memory \cite{yassine2017index}, and the design of Grassmannian constellations mainly used in wireless communications \cite{delacruz2021new}.

A binary constant weight code $\C$ is composed of $M$ different codewords, each of which is a binary row vector $\c = [c_0 ~ \cdots ~ c_{n-1}]$ with a length $n$ and weight $w$.
Let $d(\c_m, \c_{m'})$ be the Hamming distance between binary vectors $\c_m$ and $\c_{m'}$.
Its minimum value $d = \min_{0 \leq m < m' < M} d(\c_m, \c_{m'})$ is termed the minimum distance of $\C$.
The maximum number of possible codewords is denoted by $A(n, d, w)$, which is an upper bound on $M$.
The main concern in the coding theory \cite{johnson1962new,brouwer1980few,nurmela1997new,smith2006new,montemanni2009heuristic,etzion2014new} is to clarify $A(n, d, w)$ or to construct a code that achieves $A(n, d, w)$.
However, depending on the parameters $(n, d, w)$, the optimization of binary constant weight codes becomes a challenging task due to its combinatorial explosion.

Quantum computation has proven to have advantages over classical computation regarding query complexity required for solving specific problems, such as Shor's algorithm \cite{shor1994algorithms} and Grover's algorithm \cite{grover1996fast}.
Grover's search algorithm finds a solution in an unordered database of $N$ elements with query complexity $O(\sqrt{N})$.
Boyer \etal extended Grover's algorithm to the case where the number of desired solutions is initially unknown \cite{BOYER1998TIGHT}, which is termed the Boyer-Brassard-H\o yer-Tapp (BBHT) algorithm.
Then, D\" urr and H\o yer proposed Grover adaptive search (GAS) \cite{DURR1999QUANTUM,BULGER2003IMPLEMENTING,BARITOMPA2005GROVER}, an extension of the BBHT algorithm to solve optimization problems.
GAS can be interpreted as a kind of quantum exhaustive search algorithm, which guarantees the optimality of the solutions obtained.
At the time of writing, it is the only algorithm that provides the quadratic speedup for solving a binary optimization problem.
An efficient construction method for a quantum circuit corresponding to a polynomial objective function has long been unknown.
The breakthrough is Gilliam's method proposed in \cite{GILLIAM2021GROVER} that efficiently constructs a quantum circuit corresponding to an objective function of a quadratic unconstrained binary optimization (QUBO) problem. This method has already demonstrated quadratic speedups of maximum likelihood detection \cite{norimoto2022quantum} and wireless channel assignment problem \cite{sano2022qubit}.
Note that it also supports higher-order terms in addition to quadratic terms and is a game-changing innovation. 

Against this background, we newly formulate the problem of finding a binary constant weight code as a QUBO problem.
This novel formulation enables the use of GAS and provides a quadratic speedup for the search problem.
Additionally, focusing on the inherent structure of the problem, our proposed GAS further accelerates the quadratic speedup, which is a unique attempt in the literature.
The major contributions of this paper are summarized as follows:
\begin{enumerate}
    \item The problem of finding an optimal binary constant weight code is newly formulated as a QUBO problem, and the number of qubits required for constructing a quantum circuit is maximally reduced. In this formulation, the Hamming distance between codewords, which is usually represented by exclusive OR (XOR), is attributed to the representation using an inner product. Furthermore, a unique trick using the inner product is incorporated into the objective function to guarantee that the minimum distance of an obtained code is at least a given value $d$.
    \item Two novel mathematical properties are clarified: (1) an upper bound on the minimum of the objective function value and (2) a lower bound on the exact number of solutions. These clarifications help reduce the number of qubits and query complexity, which are supported by our algebraic analysis and numerical simulations.
\end{enumerate}

The remainder of this paper is organized as follows:
In Section~II, we review conventional quantum search algorithms such as BBHT and GAS.
In Section~III, we present our novel QUBO formulation, and our novel algorithm based on GAS.
In Section~IV, the proposed GAS is compared with the conventional GAS, where both rely on the proposed formulation.
Finally, in Section~V, we conclude this paper.

Italicized symbols represent scalar values, and bold symbols represent vectors and matrices.
We use zero-based indexing.
Table~\ref{table:sym} summarizes the important mathematical symbols used in this paper.

\begin{table}[tb]
        \centering
        \caption{List of important mathematical symbols. \label{table:sym}}
        \begin{tabular}{lll}
            $\mathbb{F}_2 = \{0, 1\}$ & & Finite field of order 2 \\
            $\mathbb{R}$ & & Real numbers \\
            $\mathbb{C}$ & & Complex numbers\\
            $\mathbb{Z}$ & & Integers \\
            $j$ & $\in \mathbb{C}$ & Imaginary number\\
            $(\cdot)^\mathrm{T}$ & & Matrix transpose\\
            $\overline{(\cdot)}$ & $\in \mathbb{R}$ & Upper bound\\
            $\underline{(\cdot)}$ & $\in \mathbb{R}$ & Lower bound\\
            $\C$ & & Code \\
            $\c$ & $\in \mathbb{F}_2^{1 \times n}$ & Codeword\\
            $d(\c_m, \c_{m'})$ & $\in \mathbb {Z}$ & Hamming distance between $\c_m$ and $\c_{m'}$\\
                $d$ & $\in \mathbb {Z}$ & Minimum distance of a code\\
                $n$ & $\in \mathbb {Z}$ & Length of a codeword\\
                $w$ & $\in \mathbb {Z}$ & Hamming weight\\
                $M$ & $\in \mathbb {Z}$ & Number of codewords\\
                $x$ & $\in \mathbb{F}_2$ & Binary variable \\
                $E(\cdot)$ & $\in \mathbb{Z}$ & Objective function\\
                $\rho$ & $\in \mathbb{Z}$ & Penalty coefficient of $E(\cdot)$\\
                $l$ & $\in \mathbb{Z}$ & Exponential coefficient of $E(\cdot)$\\
                $q_{1}$ & $\in \mathbb{Z}$ & Number of QUBO variables \\
                $q_{2}$ & $\in \mathbb{Z}$ & Number of qubits required for $E(\cdot)$\\
                $y_i$ & $\in \mathbb{Z}$ & Threshold for $E(\cdot)$ at $i$-th iteration \\
                $L_i$ & $\in \mathbb{Z}$ & Number of Grover operators \\
                $\mathbf{P}(n, w)$ & $\in \mathbb{F}_2^{\binom{n}{w} \times n}$ & Combinatorial matrix~\cite{frenger1999parallel}\\
                $\mathbf{p}_r$ & $\in \mathbb{F}_2^{1 \times n}$ & $(r+1)$-th row vector of $\mathbf{P}(n, w)$
        \end{tabular}
\end{table}

\section{Conventional Quantum Search Algorithms}
Quantum search algorithms are among the most important applications in quantum computation. In this section, we introduce the conventional quantum search and optimization algorithms upon which our proposed algorithm is based.

\subsection{Grover's Search and BBHT Algorithm \cite{BOYER1998TIGHT,BRASSARD2002QUANTUM}}
Grover's search algorithm~\cite{grover1996fast} finds one of $t$ solutions in $N$ elements.
The query complexity in the quantum domain is $O\qty(\sqrt{N / t})$, which is the total number of Grover operators applied to reach a desired solution.
Specifically, the probability of observing the desired solution is amplified by applying the Grover operator $\mathbf{G}$ to the uniform superposition state $\frac{1}{\sqrt{N}}\sum_i \Ket{i}$.
When the Grover operator is applied $L$ times, the success probability is expressed as \cite{BRASSARD2002QUANTUM}
\begin{equation}
    \label{eq:p_success}
    P_{\mathrm{success}}(L) = \sin^2\left((2L + 1)\arcsin \qty(\sqrt{\frac{t}{N}})\right).
\end{equation}
According to \eqref{eq:p_success}, the number of Grover operators that maximizes $P_{\mathrm{success}}(L)$ can be expressed as \cite{BRASSARD2002QUANTUM}
\begin{equation}
    L_{\mathrm{opt}} =  \left\lfloor \frac{\pi}{4}\sqrt{\frac{N}{t}} \right\rfloor.
    \label{eq:Lopt}
\end{equation}
Here, the number of solutions $t$ must be known to obtain $L_{\mathrm{opt}}$, but in practice, it cannot be obtained in advance.
To cope with this problem, Boyer \etal proposed the BBHT algorithm.
As summarized in Algorithm~\ref{alg:BBHT}, it searches for an appropriate $L$ until the desired solution is found.
This repetition achieves the query complexity of $O\qty(\sqrt{ N / t})$ even if the number of solutions is unknown.
To be more specific, $L$ is a random variable drawn from a uniform distribution $[0, k)$, where the parameter $k$ is increased by $k_{i+1} = \min\qty{\lambda k_i, \sqrt{2^{q_1}}}$ for each iteration.
The parameter $\lambda$ is a constant value related to the increase rate of $k$, typically ranging between $1< \lambda < 4/3$ \cite{BOYER1998TIGHT}.
\begin{algorithm}[tb]
    \caption{BBHT Algorithm~\cite{BOYER1998TIGHT}.\label{alg:BBHT}}
    \begin{algorithmic}[1]
        \renewcommand{\algorithmicrequire}{\textbf{Input:}}
        \renewcommand{\algorithmicensure}{\textbf{Output:}}
        \REQUIRE $\lambda > 1$
        \ENSURE $\mathbf{x}$
        
        \STATE {Set $k_0 = 1$ and $i = 0$}.
        \WHILE{Optimal solution not found}
        \STATE\hspace{\algorithmicindent}{Randomly select the rotation count $L_i$ from the set $\{0, 1, ..., \lceil k_i-1 \rceil$\}}.
        \STATE\hspace{\algorithmicindent}{Evaluate $\mathbf{G}^{L_i} \mathbf{A} \Ket{0}_{q_{1}}$ to obtain $\mathbf{x}$}.
        \STATE\hspace{\algorithmicindent}{$k_{i+1}=\min{\{\lambda k_i,\sqrt{2^{q_{1}}}}\}$}.
        \ENDWHILE
    \end{algorithmic}
\end{algorithm}

\subsection{GAS Algorithm \cite{DURR1999QUANTUM,BULGER2003IMPLEMENTING,BARITOMPA2005GROVER,GILLIAM2021GROVER}\label{SUBSEC:GAS}}
GAS is an algorithm that extends the BBHT algorithm to obtain the optimal solution for a minimization problem
\begin{equation}
    \begin{aligned}
        \min_{\mathbf{x}} \quad & E(\mathbf{x}) \\
        \textrm{s.t.} \quad & x_i \in \mathbb{F}_2 = \{0, 1\} ~~~~ (i = 0, \cdots, q_1-1).
    \end{aligned}
\end{equation}
\begin{algorithm}[tb]
    \caption{Conventional GAS~\cite{BARITOMPA2005GROVER,GILLIAM2021GROVER}.\label{alg:conv_GAS}}
    \begin{algorithmic}[1]
        \renewcommand{\algorithmicrequire}{\textbf{Input:}}
        \renewcommand{\algorithmicensure}{\textbf{Output:}}
        \REQUIRE $E:\mathbb{F}_2^{q_{1}}\rightarrow\mathbb{Z}, \lambda=1.34$
        \ENSURE $\mathbf{x}$
        \STATE {Uniformly sample $\mathbf{x}_0 \in \mathbb{F}_2^{q_{1}} $ and set $y_0=E(\mathbf{x}_0)$}.

        \STATE {Set $k = 1$ and $i = 0$}.

        \REPEAT
        \STATE\hspace{\algorithmicindent}{Randomly select the rotation count $L_i$ from the set $\{0, 1, ..., \lceil k-1 \rceil$\}}.
        \STATE\hspace{\algorithmicindent}{Evaluate $\mathbf{G}^{L_i} \mathbf{A}_{y_i} \Ket{0}_{q_{1} + q_{2}}$ to obtain $\mathbf{x}$ and $y$}.
        \hspace{\algorithmicindent}\IF{$y<y_i$}
        \STATE\hspace{\algorithmicindent}{$\mathbf{x}_{i+1}=\mathbf{x}, y_{i+1}=y,$ and $k=1$}.
        \hspace{\algorithmicindent}\ELSE{\STATE\hspace{\algorithmicindent}{$\mathbf{x}_{i+1}=\mathbf{x}_i, y_{i+1}=y_i,$ and $k=\min{\{\lambda k,\sqrt{2^{q_{1}}}}\}$}}.
        \ENDIF
        \STATE{$i=i+1$}.
        \UNTIL{a termination condition is met}.
    \end{algorithmic}
\end{algorithm}
Here, $\mathbf{x} = (x_0, \cdots, x_{q_1 - 1})$ denotes binary variables and $E(\mathbf{x}) \in \mathbb{Z}$ denotes an arbitrary objective function.
As with the BBHT algorithm, the query complexity is $O\qty(\sqrt{N / t})$, where $N$ is the search space size, i.e., $N = 2^{q_1}$.
As summarized in Algorithm~\ref{alg:conv_GAS}, first, the initial value of the threshold $y_0 = E(\mathbf{x}_0)$ is determined by a random solution $\mathbf{x}_0$.
Next, the BBHT algorithm is used to search for a better solution $\mathbf{x}$ such that the value of the objective function is less than the current threshold $y_{i}$, and the threshold is updated by $y_{i+1} = E(\mathbf{x})$.
Thereby, the optimal solution is obtained through multiple trials.
The optimal value of the parameter $\lambda$ that minimizes the query complexity is claimed to be 1.34 \cite{BARITOMPA2005GROVER}.

\begin{figure}[tb]
    \centering  \includegraphics[keepaspectratio,scale=0.7]{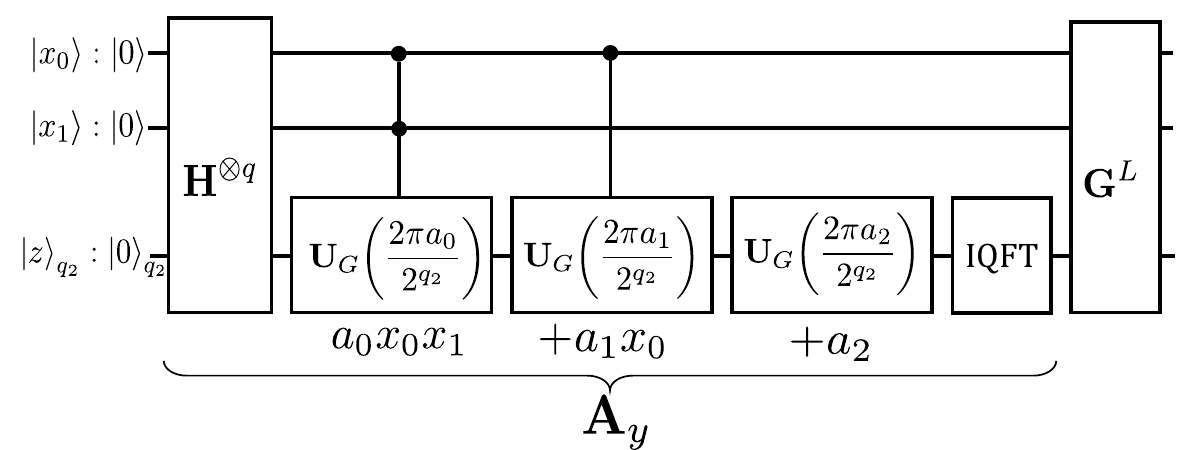}
    \begingroup
    \newlength{\xfigwd}
    \setlength{\xfigwd}{\textwidth}
    \caption{Quantum circuit corresponding to $E(\mathbf{x}) - y = a_0x_0x_1 + a_1x_0 + a_2$.}
    \label{fig:circuit1}
    \endgroup
\end{figure}
Most conventional studies have assumed the existence of a black-box quantum oracle $\mathbf{O}$ that can identify the states of interest.
The breakthrough was Gilliam's method \cite{GILLIAM2021GROVER}.
In \cite{GILLIAM2021GROVER}, Gilliam \etal proposed an efficient construction method for a quantum circuit that corresponds to an objective function of binary optimization.
Aided by two's complement representation of the objective function value, the quantum oracle $\mathbf{O}$ can be constructed easily.
As an explicit example, Fig.~\ref{fig:circuit1} shows a quantum circuit corresponding to the objective function $E(\mathbf{x}) - y = a_0x_0x_1 + a_1x_0 + a_2$, which has quantum gates that correspond to the polynomial terms $a_0x_0x_1$, $a_1x_0$, and $a_2$.
Note that $a_2$ is a constant that includes the constant term of $E(\mathbf{x})$ and $-y$.

To investigate the details of Gilliam's construction method, $q = q_{1} + q_{2}$ qubits are required. $q_{1}$ denotes the number of binary variables $(x_0, \cdots, x_{q_1})$, and $q_{2}$ denotes the number of qubits for encoding the objective function $E(\mathbf{x})$.
Here, $q_{2}$ satisfies
\begin{equation}
    -2^{q_{2} - 1} \leq \min_{\mathbf{x}} E(\mathbf{x}) \leq \max_{\mathbf{x}} E(\mathbf{x}) < 2^{q_{2} - 1}
\end{equation}
due to the two's complement representation.
Next, we describe the procedure for constructing a quantum circuit \cite{GILLIAM2021GROVER}.
\begin{enumerate}
    \item Hadamard gates $\mathbf{H}^{\otimes q}$ act on the initial state $\Ket{0}_{q}$ to create a uniform superposition state, yielding the transition of \cite{GILLIAM2021GROVER}
    \begin{equation}
    \begin{split}
        \Ket{0}_{q} \xrightarrow{\mathbf{H}^{\otimes q}} &\frac{1}{\sqrt{2^{q}}} \sum_{i=0}^{2^{q}-1} \Ket{i}_{q}\\
        = &\frac{1}{\sqrt{2^{q_1 + q_2}}} \sum_{\mathbf{x} \in \mathbb{F}_2^{q_1}} \sum_{i=0}^{2^{q_2}-1} \Ket{\mathbf{x}}_{q_1} \Ket{i}_{q_2}.
    \end{split}
    \label{eq:AyH}
    \end{equation}
    Using the tensor product $\mathbf{H} \otimes \mathbf{H}$, the Hadamard gates $\mathbf{H}^{\otimes q}$ are defined as
    \begin{equation}
        \mathbf{H}^{\otimes q} = \underbrace{\mathbf{H} \otimes \mathbf{H} \otimes \cdots \otimes \mathbf{H}}_{q}
    \end{equation}
    and
    \begin{equation}
        \mathbf{H} = \frac{1}{\sqrt{2}}\mqty[1 & 1 \\ 1 & -1].
    \end{equation}
    \item Each term of $E(\mathbf{x}) - y$ corresponds to a unitary gate $\mathbf{U}_{G}(\theta)$. It acts on the lower $q_2$ qubits to rotate the phase of quantum states. Let $a$ be a coefficient of the term. Then, the phase shift is defined as $\theta = \frac{2 \pi a}{2^{q_2}}$. That is, the phase advance represents addition, and the phase delay represents subtraction. The unitary gate is defined by
    \begin{equation}
        \mathbf{U}_{G}(\theta) = \underbrace{\mathbf{R}(2^{q_2 - 1}\theta) \otimes \cdots \otimes \mathbf{R}(2^{0}\theta)}_{q_2}
    \end{equation}
    and the phase gate
    \begin{equation}
        \mathbf{R}(\theta) = \mqty[1 & 0 \\ 0 & e^{j\theta}].
    \end{equation}
    For a state $\mathbf{x}_0 \in \mathbb{F}_2^{q_1}$, $\theta' = 2 \pi \qty(E(\mathbf{x}_0) - y) / 2^{q_2}$ and $\mathbf{U}_{G}(\theta')$ yield the transition of \cite{GILLIAM2021GROVER}
    \begin{equation}
        \frac{1}{\sqrt{2^{q_2}}} \sum_{i=0}^{2^{q_2}-1} \Ket{i}_{q_2} \xrightarrow{\mathbf{U}_{G}\qty(\theta')}
        \frac{1}{\sqrt{2^{q_2}}}\sum_{i=0}^{2^{q_2} - 1}e^{ji\theta'} \Ket{i}_{q_2}.
        \label{eq:AyUG}
    \end{equation}
    The interaction between binary variables is expressed by controlled $\mathbf{U}_{G}(\theta)$, as shown in Fig.~\ref{fig:circuit1}.
    \item Applying the inverse quantum Fourier transform (IQFT) \cite{shor1997polynomialtime} to the lower $q_{2}$ qubits yields the transition of \cite{GILLIAM2021GROVER}
    \begin{equation}
        \frac{1}{\sqrt{2^{q_2}}}\sum_{i=0}^{2^{q_2} - 1}e^{ji\theta'} \Ket{i}_{q_2}  \xrightarrow{\mathrm{IQFT}} \frac{1}{\sqrt{2^{q_2}}}\Ket{E(\mathbf{x}_0) - y}_{q_2}.
    \end{equation}
    In summary, steps 1--3 are referred to as a state preparation operator $\mathbf{A}_{y}$. It encodes the value $E(\mathbf{x}) - y$ into $q_2$ qubits for all the states. That is, it satisfies \cite{GILLIAM2021GROVER}
    \begin{equation}
        \mathbf{A}_{y} \Ket{0}_{q} = \frac{1}{\sqrt{2^{q_{1}}}}\sum_{\mathbf{x} \in \mathbb{F}_2^{q_1}} \Ket{\mathbf{x}}_{q_{1}}\Ket{E(\mathbf{x})-y}_{q_{2}}.
    \end{equation}
    \item Finally, to amplify the states of interest, the Grover operator $\mathbf{G} = \mathbf{A}_{y}\mathbf{D}\mathbf{A}_{y}^H\mathbf{O}$ acts $L$ times, where $\mathbf{D}$ denotes the Grover diffusion operator~\cite{grover1996fast}. The oracle $\mathbf{O}$ identifies the states of interest that satisfy $E(\mathbf{x}) - y < 0 \Leftrightarrow E(\mathbf{x}) < y$. Since we use the two's complement representation, such states can be identified by a single Pauli-Z gate
    \begin{equation}
        \mathbf{Z} = \mqty[1 & 0 \\ 0 & -1],
    \end{equation}
    which is acted on the beginning of $q_2$ qubits.
\end{enumerate}

Note that an open-source implementation of the GAS algorithm is available from IBM Qiskit~\cite{qiskit}.

\section{Proposed Quantum Search Algorithm}
In this section, we propose a novel objective function that attributes the problem of finding a binary constant weight code to an optimization problem. The problem is defined as finding a code such that the code length $n$, weight $w$, number of codewords $M$, and minimum distance $d$ are each equal to the given values $(n, w, M, d)$, which is hereinafter referred to as condition $(n, w, M, d)$.
Additionally, focusing on the inherent properties of the problem, we derive an upper bound on the minimum value of the objective function and a lower bound on the number of solutions.
Using these bounds, we modify GAS and reduce the number of qubits and query complexity.

\subsection{Proposed Formulation}
To obtain the optimal solution for a binary constant weight code $A(n,d,w)$, we newly formulate the search problem as a QUBO problem, which is supported by GAS.
A challenging task here is that the objective function has to incorporate the Hamming distance between codewords.
In general, the Hamming distance $d(\c_m, \c_{m'})$ between codewords $\c_m$ and $\c_{m'}$ is expressed using the XOR operation $\oplus$, which is not a QUBO function.
As long as the Hamming weight is constant, the XOR operation can be expressed by an inner product $\expval{\c_m, \c_{m'}} = \sum_{i=1}^nc_{m, i} c_{m', i}$, where $c_{m, i}$ is the $i$-th element of $\c_m$.
\begin{thm}
If a code has a constant weight $w$, then, for any pair of codewords $(\c_{m}, \c_{m'})$, we have
\begin{equation}
    \label{eq:hamming_distance}
    d(\c_{m}, \c_{m'}) =  2\qty(w - \expval{\c_m, \c_m'}).
\end{equation}
\end{thm}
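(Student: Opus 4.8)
The plan is to prove the identity by a direct counting argument over the $n$ coordinate positions. First I would partition the index set $\{0, \ldots, n-1\}$ according to the pair of bit values $(c_{m,i}, c_{m',i})$ taken at position $i$: let $a$ denote the number of positions where both codewords equal $1$, $b$ the number where $\c_m$ is $1$ and $\c_{m'}$ is $0$, and $c$ the number where $\c_m$ is $0$ and $\c_{m'}$ is $1$ (the positions where both are $0$ play no role). By definition of the inner product over the reals, $\expval{\c_m, \c_{m'}} = \sum_i c_{m,i} c_{m',i} = a$, since a product term is nonzero exactly when both bits are $1$.

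Next I would invoke the constant-weight hypothesis. The Hamming weight of $\c_m$ is the number of its nonzero coordinates, namely $a + b = w$, and likewise the weight of $\c_{m'}$ is $a + c = w$. Subtracting these two relations gives $b = c$, and hence $b = c = w - a = w - \expval{\c_m, \c_{m'}}$.

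Finally, the Hamming distance $d(\c_m, \c_{m'})$ counts the positions at which the two codewords differ, which are precisely the $b$ positions of type $(1,0)$ together with the $c$ positions of type $(0,1)$; therefore $d(\c_m, \c_{m'}) = b + c = 2\qty(w - \expval{\c_m, \c_{m'}})$, as claimed.

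There is essentially no real obstacle here; the only point that deserves care is that the inner product is taken over $\mathbb{R}$ rather than over $\mathbb{F}_2$, so that $\sum_i c_{m,i} c_{m',i}$ genuinely equals the overlap count $a$ and not merely its parity. Once that is pinned down, the conclusion is immediate from the weight bookkeeping. Equivalently, one may start from the standard identity $\mathrm{wt}(\c_m \oplus \c_{m'}) = \mathrm{wt}(\c_m) + \mathrm{wt}(\c_{m'}) - 2\expval{\c_m, \c_{m'}}$ and substitute $\mathrm{wt}(\c_m) = \mathrm{wt}(\c_{m'}) = w$, which reduces the statement to a one-line calculation.
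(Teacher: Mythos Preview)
Your proof is correct and essentially matches the paper's argument: the paper verifies the coordinatewise identity $c_{m,i}\oplus c_{m',i} = (c_{m,i}+c_{m',i}) - 2c_{m,i}c_{m',i}$ and sums over $i$, which is exactly the identity $\mathrm{wt}(\c_m\oplus\c_{m'}) = \mathrm{wt}(\c_m)+\mathrm{wt}(\c_{m'})-2\expval{\c_m,\c_{m'}}$ you invoke at the end. Your partition-and-count presentation is just a repackaging of the same computation.
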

\begin{proof}
For any pair of $i$-th elements of codewords  $(c_{m, i}, c_{m', i}) = (0, 0), (0, 1), (1, 0), (1, 1)$, the following equation holds:
\begin{equation}
    \label{eq:identity}
    c_{m, i}\oplus c_{m', i} = \qty(c_{m, i} + c_{m', i}) - 2 c_{m, i}c_{m', i}.
\end{equation}
Since weights of codewords are constant, $w$, by summing \eqref{eq:identity} over index $i=1,\cdots, n$, we have
\begin{align}
    \sum_{i=1}^n\qty(c_{m, i}\oplus c_{m', i}) =& d(\c_m, \c_{m'}) \nonumber\\
    =& \sum_{i=1}^n \qty(c_{m, i} + c_{m', i}) - 2 \sum_{i=1}^nc_{m, i} c_{m', i} \nonumber\\
    =& 2\qty(w - \expval{\c_m, \c_{m'}})
\end{align}
\end{proof}
Theorem~1 allows us to transform the minimum distance of a code $d(\C)$ into
\begin{align}
    \label{eq:minimum_hamming_dsitance}
    d(\C) &= \min_{0 \leq m < m' < M} d(\c_m, \c_{m'}) \nonumber\\
    &= \min_{0 \leq m < m' < M}  2\qty(w -  \expval{\c_m, \c_{m'}}) \nonumber\\
    &= 2\qty(w - \max_{0 \leq m < m' < M} \expval{\c_m, \c_{m'}}).
\end{align}
Here, the following Theorem holds.
\begin{thm}
The minimum distance of a code $\C^*$ is at least $d$, where we have
\begin{align}
    \C^* &= \argmin_{C} s(\C), \label{eq:minimize} \\
    s(\C) &= \sum_{0 \leq m < m' < M} \expval{\c_m, \c_{m'}}^l,~~\mathrm{and} \\
    l &= \left\lfloor\frac{\log \binom{M}{2}}{\log \qty(1 + \frac{2}{2w - d})} + 1\right\rfloor,
\end{align}
which is valid for the $d \neq 2w$ case.\footnote{If $d = 2w$, according to \eqref{eq:minimum_hamming_dsitance}, an inner product of any pair of codewords is zero. Thus, the solution is obvious, and the maximum number of codewords is $\left\lfloor\frac{n}{w}\right\rfloor$.}
\end{thm}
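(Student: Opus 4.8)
The plan is to exploit the gap, after raising to the $l$-th power, between the objective value of a code that already meets the distance requirement and that of any code that does not. The starting point is Theorem~1: because $d(\c_m,\c_{m'}) = 2(w - \expval{\c_m,\c_{m'}})$, every constant-weight code has even pairwise distances, and hence ``minimum distance at least $d$'' is equivalent to $\expval{\c_m,\c_{m'}} \le w - d/2$ for every pair $m < m'$. Write $t := w - d/2$. Since inner products of binary vectors are non-negative, $d(\c_m,\c_{m'}) \le 2w$ always, so the hypothesis $d \ne 2w$ together with $d$ being even forces $d \le 2w-2$, making $t$ a positive integer; also $l \ge 1$ is an integer (because $\binom{M}{2} \ge 1$), so every summand $\expval{\c_m,\c_{m'}}^l$ in $s(\C)$ is a non-negative integer.

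First I would upper-bound $s$ over admissible codes. The parameters $(n,w,M,d)$ are taken to be realizable, so fix any code $\C'$ (with $M$ weight-$w$ codewords of length $n$) whose minimum distance is at least $d$; each of its $\binom{M}{2}$ inner products is then at most $t$, giving $s(\C') \le \binom{M}{2}\, t^l$. Next I would lower-bound $s$ over ``bad'' codes: if $\C$ has minimum distance strictly less than $d$, some pair has distance at most $d-2$, which by Theorem~1 means that pair has inner product at least $t+1$, and discarding the remaining non-negative terms gives $s(\C) \ge (t+1)^l$.

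The crux is the elementary inequality $(t+1)^l > \binom{M}{2}\, t^l$. Dividing by $t^l$, it reads $\bigl(1 + \tfrac{1}{t}\bigr)^l > \binom{M}{2}$, i.e., $l\,\log\bigl(1 + \tfrac{2}{2w-d}\bigr) > \log\binom{M}{2}$ after substituting $1/t = 2/(2w-d)$. This is exactly what the stated $l$ provides: since $\lfloor x+1\rfloor = \lfloor x\rfloor + 1 > x$ for every real $x$, we get $l > \log\binom{M}{2} \big/ \log\bigl(1 + \tfrac{2}{2w-d}\bigr)$ \emph{strictly}, and $\log\bigl(1 + \tfrac{2}{2w-d}\bigr) > 0$ because $2w-d > 0$, so the strict inequality is preserved after multiplication. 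Chaining the three estimates, $s(\C) \ge (t+1)^l > \binom{M}{2}\, t^l \ge s(\C')$ for every code $\C$ with minimum distance below $d$, so no such code can minimize $s$; therefore $\C^* = \argmin_\C s(\C)$ has minimum distance at least $d$.

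As for the main obstacle, there is no deep difficulty: the work concentrates in two bookkeeping points. One is verifying that $t = w - d/2$ is a genuine positive integer, which is where the $d \ne 2w$ hypothesis and the even-distance property of constant-weight codes enter, and why the $d = 2w$ case (where $t = 0$ and $1/t$ is undefined) is split off in the footnote. The other is making the exponent inequality strict rather than merely non-strict, which relies precisely on the ``$+1$ inside the floor'' in the definition of $l$. One should also keep in mind the implicit assumption that the prescribed parameters admit a code at all, since otherwise the comparison code $\C'$ does not exist and the statement, like the search problem itself, is vacuous.
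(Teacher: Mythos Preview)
Your proof is correct and follows essentially the same route as the paper: bound $s$ from above by $\binom{M}{2}(w-d/2)^l$ on codes meeting the distance constraint, from below by $(w-d/2+1)^l$ on codes violating it, and separate the two via the choice of $l$. You are somewhat more careful than the paper in justifying that $t=w-d/2$ is a positive integer (hence the ``bad'' pair has distance at most $d-2$), in making explicit that the $+1$ inside the floor forces strict inequality, and in flagging the implicit assumption that a feasible code exists, but the argument is the same.
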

\begin{proof}
From \eqref{eq:hamming_distance}, for any pair $(m, m')$ of a code such that the minimum distance is at least $d$, the following inequality holds:
\begin{equation}
    \label{eq:d_condition}
    \expval{\c_m, \c_{m'}} \leq w - \frac{d}{2}.
\end{equation}
From this relationship, the upper bound on $s(\C)$ is given by
\begin{equation}
    \label{eq:lower_bound_obj1}
    s(\C) \leq \overline{s}(\C) = \binom{M}{2}\qty(w - \frac{d}{2})^l,
\end{equation}
where $l$ is an arbitrary coefficient.
Conversely, for a code such that the minimum distance is less than $d$, there is at least one pair $(m, m')$ that satisfies
\begin{align}
    \expval{\c_m, \c_{m'}} \geq w - \frac{d-2}{2} = w - \frac{d}{2} + 1.
\end{align}
From this relationship, the lower bound on $s(\C)$ is given by
\begin{equation}
    s(\C) \geq \underline{s}(\C) = \qty(w - \frac{d}{2} + 1)^l.
\end{equation}
By comparing the obtained upper and lower bounds, if $l$ is sufficiently large, the following inequality holds:
\begin{align}
    \binom{M}{2}\qty(w - \frac{d}{2})^l &< \qty(w - \frac{d}{2} + 1)^l.
\end{align}
Thus, if $l$ is sufficiently large, \eqref{eq:minimize} is minimized, and the minimum distance of the obtained code becomes at least $d$. Taking a logarithm of both sides, the condition for $l$ is given by
\begin{equation}
    l > \frac{\log\binom{M}{2}}{\log\qty(1 + \frac{2}{2w - d})}.
\end{equation}
\end{proof}
Based on the above relationships, we formulate the problem of finding a binary constant weight code as a QUBO problem, while maximally reducing the number of binary variables.
We propose an objective function of
\begin{equation}
    \label{eq:obj_func1}
     E(\mathbf{x}) = f(\mathbf{x}) + \rho \cdot g(\mathbf{x}),
\end{equation}
where
\begin{align}
    f(\mathbf{x}) = \sum_{0 \leq r < r' < \binom{n}{w}} \expval{\mathbf{p}_r, \mathbf{p}_{r'}}^l \cdot x_r x_{r'}
\end{align}
and
\begin{align}
    g(\mathbf{x}) = \qty(\sum_{r=0}^{\binom{n}{w}-1}x_r - M)^2.
\end{align}
Here, for index $r = 0, ~ \cdots, ~ \binom{n}{w} - 1$, $\mathbf{p}_r$ denotes the $(r+1)$-th row vector of the combinatorial matrix $\mathbf{P}(n, w)$, which represents $\binom{n}{w}$ combinations~\cite{frenger1999parallel}
\begin{equation}
    \mathbf{P}(n, w) = \mqty[\mathbf{1} & \mathbf{P}(n-1, w-1) \\ \mathbf{0} & \mathbf{P}(n-1, w)] \in \mathbb{F}_2^{\binom{n}{w} \times n}.
    \label{eq:Pnw}
\end{equation}
In \eqref{eq:Pnw}, $\mathbf{1}$ denotes a one column vector of length $\binom{n-1}{w-1}$ and $\mathbf{0}$ denotes a zero column vector of length $\binom{n-1}{w}$.
In \eqref{eq:obj_func1}, the binary variable $x_r$ indicates whether the $(r+1)$-th row of the combinatorial matrix is used as a codeword or not. That is, the relationship between $\C$ and $x_r$ is expressed by $\C = \qty{\mathbf{p}_r \middle | x_{r} = 1}$.
The penalty function $g(\mathbf{x})$ ensures that only $M$ from the $\binom{n}{w}$ combinations are used and has priority over the cost function $f(\mathbf{x})$.
Thus, the penalty coefficient $\rho$ can be defined as
\begin{align}
    \rho = \overline{f} + 1 = \binom{M}{2}(w - 1)^l + 1,
\end{align}
where $\overline{f}$ denotes the upper bound of $f$.

\subsection{Reduction of Qubits}
Since the number of qubits required to construct a quantum circuit determines feasibility, it should be maximally reduced.
Here, we modify the objective function \eqref{eq:obj_func1} so that a single codeword is always selected, leading to a reduction in the number of qubits. 
\begin{thm}
\label{thm:uniformity}
If a code that satisfies the condition $(n, w, M, d)$ exits,
then for every possible codeword $\c$ in the combinatorial matrix, at least one code exists that satisfies the
condition $(n, w, M, d)$ such that the codeword $\c$ is included.
\end{thm}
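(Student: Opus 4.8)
The plan is to prove the statement by exploiting the permutation symmetry of the coordinate positions, which leaves the search problem invariant. First I would fix notation for the action of the symmetric group $S_n$ on length-$n$ binary vectors: for a permutation $\pi$ of the index set $\{0,\dots,n-1\}$, let $\pi(\c)$ denote the codeword obtained by permuting the entries of $\c=[c_0~\cdots~c_{n-1}]$ according to $\pi$, and extend this to a code by $\pi(\C)=\{\pi(\c)\mid\c\in\C\}$. Since $\mathbf{P}(n,w)$ enumerates every weight-$w$ vector of length $n$ exactly once (by the recursion \eqref{eq:Pnw}, $\binom{n-1}{w-1}+\binom{n-1}{w}=\binom{n}{w}$), every code satisfying $(n,w,M,d)$ is automatically a selection of rows of $\mathbf{P}(n,w)$, so it suffices to work with arbitrary such codes.

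The core of the argument has two parts. The first part is to check that coordinate permutations preserve every quantity in the condition $(n,w,M,d)$. The length $n$ is untouched; the Hamming weight of $\pi(\c)$ equals that of $\c$ because $\pi$ merely relabels the support; the map $\c\mapsto\pi(\c)$ is a bijection on $\mathbb{F}_2^{1\times n}$, so distinct codewords stay distinct and $|\pi(\C)|=|\C|=M$; and since permuting coordinates does not change the number of positions in which two vectors differ, $d(\pi(\c_m),\pi(\c_{m'}))=d(\c_m,\c_{m'})$ for every pair, so the minimum distance is preserved. Hence, if $\C$ satisfies $(n,w,M,d)$, so does $\pi(\C)$. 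The second part is transitivity: $S_n$ acts transitively on the set of weight-$w$ binary vectors of length $n$, i.e.\ on the rows of $\mathbf{P}(n,w)$. Given any two such vectors $\c_0$ and $\c$, one builds a suitable $\pi$ by choosing any bijection from the support of $\c_0$ onto that of $\c$ together with any bijection between their complements.

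Putting the pieces together: assume some code $\C$ meeting the condition $(n,w,M,d)$ exists and fix an arbitrary row $\c$ of $\mathbf{P}(n,w)$, the codeword we wish to force into the code. Pick any $\c_0\in\C$ (possible since $M\ge1$) and a permutation $\pi$ with $\pi(\c_0)=\c$; then $\C^\dagger=\pi(\C)$ satisfies $(n,w,M,d)$ and contains $\c$. Because $\C^\dagger$ again corresponds to a selection of rows of $\mathbf{P}(n,w)$, it is realized by the binary variables $x_r$ of \eqref{eq:obj_func1}, in particular with $x_r=1$ for the index $r$ of $\c$. This is exactly the claim, and it is what allows one variable to be pinned to $1$ and a qubit to be dropped.

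I do not expect a serious obstacle here: the whole proof is a single symmetry observation. The only points needing care are (i) verifying that all four parameters, and especially $M$, are genuinely invariant under coordinate permutations (which hinges on $\pi$ acting as a bijection), and (ii) noting that the permuted code still lies inside the same combinatorial matrix, so the construction stays within the proposed QUBO search space rather than producing a code outside it.
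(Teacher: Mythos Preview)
Your proof is correct and follows essentially the same approach as the paper's own argument: both exploit the invariance of the condition $(n,w,M,d)$ under coordinate permutations (column swaps) together with the transitivity of $S_n$ on weight-$w$ vectors. The paper states this in a single sentence, whereas you verify each invariance and the transitivity step explicitly, but the underlying idea is identical.
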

\begin{proof}
When we consider a code as a matrix where each row vector is a codeword,
the Hamming distance between codewords does not change even if we swap any two columns of the matrix.
Therefore, by swapping the columns of any code that satisfies the condition $(n, w, M, d)$,
we can construct a code that contains any specific codeword and satisfies the condition $(n, w, M, d)$.
\end{proof}
Let $\mathbf{P}'(n,w)$ be a matrix generated by removing $\mathbf{P}(n,w)$'s row vectors, the Hamming distance of which from a codeword $\mathbf{p}_0$ is less than $d$, where $\mathbf{p}_0$ is the first row vector of $\mathbf{P}(n,w)$ \eqref{eq:Pnw}.
The objective function with reduced binary variables can be defined as
\begin{equation}
    E'(\mathbf{x}) = f'(\mathbf{x}) + \rho' \cdot g'(\mathbf{x}),
    \label{eq:obj_func2}
\end{equation}
where
\begin{align}
    \label{eq:f'}
    f'(\mathbf{x}) &= \sum_{0 \leq r < r' < q_{1}} \expval{\mathbf{p'}_r, \mathbf{p'}_{r'}}^l \cdot x_r x_{r'},\\
    \label{eq:g'}
    g'(\mathbf{x}) &= \qty(\sum_{r=0}^{q_{1}-1}x_r - (M - 1))^2,
\end{align}
and $\mathbf{p}'_r$ is the $(r+1)$-th row vector of $\mathbf{P}'(n, w)$.
The number of binary variables $q_{1}$ is equal to the number of rows in $\mathbf{P}'(n, w)$ and is expressed as
\begin{equation}
    \label{eq:bit_num1}
    q_{1} = \sum_{i=0}^{w - \frac{d}{2}} \binom{w}{i}\binom{n-w}{w-i}.
\end{equation}
The penalty coefficient $\rho'$ can be defined as
\begin{align}
    \label{eq:rho'}
    \rho' = \overline{f}' + 1 = \binom{q_{1}}{2}(w - 1)^l + 1
\end{align}
where $\overline{f}'$ denotes the upper bound of $f'$.
Let $E'_{\mathrm{max}}$ be the maximum of the objective function value, i.e., $E'_{\mathrm{max}} = \max_{\mathbf{x}}E'(\mathbf{x})$.
It is upper bounded by
\begin{align}
    &E'_{\mathrm{max}} \leq \overline{E}'_{\mathrm{max}} = \overline{f}' + \rho'\overline{g}' \nonumber\\
    =&\begin{cases}
        \binom{q_{1}}{2}(w-1)^l + \rho' \qty(q_{1} - M + 1)^2 & (2(M-1) < q_{1})  \\
        \binom{q_{1}}{2}(w-1)^l + \rho' (M-1)^2 & (2(M-1) \geq q_{1})
    \end{cases} \nonumber\\
    =&\begin{cases}
        O\qty(q_1^2w^l(q_{1} - M)^2) & (2(M-1) < q_{1}) \\
        O\qty(q_1^2w^lM^2) & (2(M-1) \geq q_{1}).
    \end{cases}
\end{align}
Then, the number of qubits required to encode the objective function value is expressed by
\begin{align}
    &q_{2}' = \lceil\log_2(\overline{E}'_{\mathrm{max}})\rceil + 1 \nonumber\\
    =& \begin{cases}
        O\qty(\log(q_1) + l\log(w)) & (2(M-1) < q_{1})  \\
        O\qty(\log(q_1) + l\log(w) + \log(M)) & (2(M-1) \geq q_{1}).
    \end{cases}
    \label{eq:bit_num2_order}
\end{align}
Assuming $w \neq 2d$, the coefficient $l$ is upper bounded by
\begin{align}
    \label{eq:l_upper}
    l &= \left\lfloor\frac{\log \binom{M}{2}}{\log \qty(1 + \frac{2}{2w - d})} + 1\right\rfloor \leq \left\lfloor\frac{\log \binom{M}{2}}{2} + 1\right\rfloor \nonumber\\
    &= O(\log(M))
\end{align}
Using \eqref{eq:l_upper}, $q_2'$ of \eqref{eq:bit_num2_order} can be simplified to
\begin{equation}
    q_{2}' = O\qty(\log(q_1) + \log(M)\log(w)).
\end{equation}
The total number of qubits is
\begin{equation}
    \label{eq:bit_num_all}
    q_{1} + q_{2}' = O\qty(q_{1} + \log(M)\log(w)).
\end{equation}
The query complexity in the quantum domain is expressed by
\begin{equation}
    O\qty(\sqrt{\frac{2^{q_{1}}}{t}}),
\end{equation}
where $t$ is the exact number of solutions and is analyzed later.

\subsection{Further Speedup for Grover Adaptive Search}
\subsubsection{Novel Initial Threshold}
Let $E'_{\mathrm{min}}$ be the minimum of the objective function value, i.e., $E'_{\mathrm{min}} = \min_{\mathbf{x}}E'(\mathbf{x})$.
Similar to \eqref{eq:lower_bound_obj1}, its upper bound can be derived as
\begin{equation}
    \label{eq:lower_bound_obj2}
    E'_{\mathrm{min}} < \overline{E}'_{\mathrm{min}} + 1 = \binom{M-1}{2}\qty(w - \frac{d}{2})^l + 1.
\end{equation}
The objective function value $E'(\mathbf{x})$ may become greater than $\overline{E}'_{\mathrm{min}}$.
By contrast, if it is smaller than $\overline{E}'_{\mathrm{min}}$, the obtained code has at least the minimum distance $d$.
Therefore, the desired code can be obtained without updating the threshold of GAS.
That is, we set the initial value of the threshold to
\begin{equation}
    y_0 = \overline{E}'_{\mathrm{min}} + 1. \label{eq:y0Emin}
\end{equation}

Additionally, if the objective function value is greater than $\overline{E}'_{\mathrm{min}}$, the obtained code is not good regarding the minimum distance.
This fact helps to reduce the number of qubits $q_2$.
Specifically, the penalty coefficient $\rho'$ is replaced by
\begin{equation}
    \rho'' = \overline{E}'_{\mathrm{min}} + 1. \label{eq:rho''}
\end{equation}
and we have an updated objective function
\begin{equation}
    E''(\mathbf{x}) = f'(\mathbf{x}) + \rho'' \cdot g'(\mathbf{x})
    \label{eq:obj_func3}
\end{equation}
where $f'$ and $g'$ are defiend by \eqref{eq:f'} and \eqref{eq:g'}, respectively. Then, the objective function is upper bounded by
\begin{align}
    \label{eq:obj_func3_upper}
    &E''_{\mathrm{max}} \leq \overline{E}''_{\mathrm{max}} \nonumber\\
    =&\begin{cases}
        O\qty(q_1^2w^l + M^2(w - d/2)^l(q_{1} - M)^2) & (2(M-1) < q_{1}) \\
        O\qty(q_1^2w^l + M^4(w - d/2)^l) & (2(M-1) \geq q_{1}).
    \end{cases}
\end{align}
As with the $E_{\mathrm{max}}'$ case, the number of qubits required to encode the objective function value can be simplified to
\begin{align}
    \label{eq:bit_num3}
    q_{2}'' &= \lceil\log_2(\overline{E}''_{\mathrm{max}})\rceil + 1 \nonumber\\
    &= O\qty(\log(q_1) + \log(M)\log(w)).
\end{align}
The number of qubits that can be reduced is bounded by
\begin{align}
    &\lceil\log_2( \overline{E}'(\mathbf{x}) )\rceil - \lceil\log_2( \overline{E}''(\mathbf{x}) )\rceil \nonumber\\
    > & \log_2\qty(\frac{\overline{E}'(\mathbf{x})}{\overline{E}''(\mathbf{x})}) - 2 =  \log_2\qty(\frac{\overline{f}' + \rho'\overline{g}' - 1}{\overline{f}' + \rho''\overline{g}' - 1}) - 2 \nonumber\\
    > & \log_2\qty(\frac{\overline{f}' + \rho'\overline{g}'}{\overline{f}' + \rho''\overline{g}'}) - 2 = \log_2\qty(\frac{1 + \overline{g}'}{1 + \frac{\rho''}{\rho'}\overline{g}'}) - 2 \nonumber\\
    = & \log_2\qty(\frac{1 + \overline{g}'}{1 + \frac{(M-1)(M-2)(2w - d)^l + 4}{q_1(q_1 - 1)(2w-2)^l + 4}\overline{g}'}) - 2,
\end{align}
which is obviously positive since $\rho' > \rho''$ when $q_1 > M$ and $d > 2$.

Note that, for constructing a code that satisfies the condition $(n, w, M, d)$, it is sufficient to obtain just one code whose objective function is smaller than $\overline{E}'_{\mathrm{min}}$.

\subsubsection{Novel Limit on the Number of Grover Operators}
The BBHT search and GAS algorithms try to find the appropriate number of Grover operators, which increases the query complexity compared to the case where the number of solutions is previously known.
Although, in general, the exact number of solutions $t$ is unknown, the possible range of $t$ may be obtained from the inherent structure of the problem.
Our proposed algorithm exploits this information.
\begin{thm}
\label{THM:LOWER_BOUNDS_SOLUTION}
If $w \leq d$ and $A(n-1,  d, w) < M-1$, then the exact number of solutions $t$ is lower bounded by
\begin{align}
    t &\geq \underline{t} = \left\{
    \begin{array}{ll}
        w! & (w-\frac{d}{2}=1) \\
        \displaystyle{\min_{2 \leq i \leq w - d/2}} \ \binom{w}{i} & (w-\frac{d}{2} \geq 2).
    \end{array} \label{eq:tlower}
    \right.
\end{align}
\end{thm}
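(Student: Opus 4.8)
The plan is to read the QUBO solutions off as a purely combinatorial object, manufacture one such object from a feasible code, and then inflate it into many by a symmetry action; the hypothesis $A(n-1,d,w)<M-1$ will supply the ``every coordinate is used'' property that makes the symmetry act almost freely. First I would pin down what a solution is. Since $f'\ge 0$ (all inner products are nonnegative and $l\ge 1$) and $\rho''=\overline E'_{\min}+1=y_0$, any $\mathbf x$ with $g'(\mathbf x)\neq 0$ already has $E''(\mathbf x)\ge\rho''=y_0$; so a solution ($E''(\mathbf x)<y_0$) must have $g'(\mathbf x)=0$, i.e.\ select exactly $M-1$ rows of $\mathbf P'(n,w)$, and must satisfy $f'(\mathbf x)\le\overline E'_{\min}=\binom{M-1}{2}(w-d/2)^l$. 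Running the estimate from the proof of Theorem~2 (this is why $l$ was defined using $\binom{M}{2}$): if some pair of selected rows had inner product $\ge w-d/2+1$, then $f'(\mathbf x)\ge (w-d/2+1)^l>\binom{M}{2}(w-d/2)^l>\overline E'_{\min}$, a contradiction. Hence every pair of selected rows is at distance $\ge d$, and since each row of $\mathbf P'(n,w)$ is at distance $\ge d$ from $\mathbf p_0$ (which has $1$'s exactly in coordinates $0,\dots,w-1$) by construction, the $M-1$ rows together with $\mathbf p_0$ form an $(n,w,M,d)$ code; the converse is immediate. So $t$ equals the number of $(M-1)$-subsets $\mathcal C'$ of the rows of $\mathbf P'(n,w)$ of pairwise distance $\ge d$. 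Assuming the instance is feasible, Theorem~\ref{thm:uniformity} lets me permute columns of a feasible code so that $\mathbf p_0$ becomes one of its codewords, and deleting $\mathbf p_0$ yields such a $\mathcal C'$, so $t\ge 1$.

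Next comes the key structural fact, which is exactly where $A(n-1,d,w)<M-1$ enters: for \emph{any} such $\mathcal C'$ and any coordinate $j$, at least one codeword of $\mathcal C'$ has a $1$ in coordinate $j$ --- otherwise deleting coordinate $j$ leaves $M-1$ distinct weight-$w$ vectors of length $n-1$ (distinctness survives because two codewords at distance $\ge d\ge 2$ differ in $\ge 2$ coordinates) with all pairwise distances unchanged, forcing $M-1\le A(n-1,d,w)$. Now $S_w$ acting on the coordinates $\{0,\dots,w-1\}$ fixes $\mathbf p_0$ and preserves both ``weight $w$'' and ``distance $\ge d$ from $\mathbf p_0$'', hence permutes the solutions; by orbit--stabiliser it is enough to bound $|\mathrm{Stab}_{S_w}(\mathcal C')|$ for a single solution. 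In the case $w-d/2=1$ (so $d=2w-2$; here $w\le d$ just means $w\ge2$, and I take $w\ge3$) every codeword meets $\{0,\dots,w-1\}$ in one coordinate; if $\sigma$ fixed $\mathcal C'$ but moved a coordinate $p$ to $q\neq p$, a codeword $\c$ with its unique first-block $1$ at $p$ would have $\sigma(\c)\in\mathcal C'$ agreeing with $\c$ outside $\{0,\dots,w-1\}$, so $\langle\c,\sigma(\c)\rangle=w-1>1$, impossible between two distinct codewords of a code of minimum distance $\ge d$; thus $\mathrm{Stab}_{S_w}(\mathcal C')=\{e\}$ and $t\ge w!=\underline t$.

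In the case $w-d/2\ge2$ (so $d\ge w$, whence $w-d/2\le w/2$): for $\sigma\in\mathrm{Stab}_{S_w}(\mathcal C')$ and any codeword $\c$ with first-block support $F$ we get $\langle\c,\sigma(\c)\rangle=|F\cap\sigma(F)|+(w-|F|)\ge w-|F|\ge d/2>w-d/2$ unless $\sigma(\c)=\c$, so the support of every codeword is $\mathrm{Stab}$-invariant. Consequently every first-block support is a union of $\mathrm{Stab}$-orbits on $\{0,\dots,w-1\}$, and by the coordinate-hitting fact each such orbit lies inside some first-block support, so all orbit sizes are $\le w-d/2\le w/2$. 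Since $\mathrm{Stab}_{S_w}(\mathcal C')$ embeds in the product of the symmetric groups on its orbits, $|\mathrm{Stab}_{S_w}(\mathcal C')|\le\prod_j m_j!\le w!/\binom w2$ over any partition $\sum_j m_j=w$ with all $m_j\le w/2$, so the orbit of $\mathcal C'$ has at least $\binom w2=\min_{2\le i\le w-d/2}\binom wi=\underline t$ distinct elements, each a solution.

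I expect the main obstacle to be the stabiliser bound in this second case: turning ``no codeword can be permuted onto a nearby codeword'' into an honest bound on $|\mathrm{Stab}_{S_w}(\mathcal C')|$, verifying $\prod_j m_j!\le w!/\binom w2$ for all admissible orbit partitions (it is tight at $w=4$), and the borderline $d=w$, where $\langle\c,\sigma(\c)\rangle=d/2=w-d/2$ is only \emph{equal} to the threshold rather than exceeding it. These borderline and degenerate regimes --- together with small cases such as $w=2,\ d=2$, where a unique optimal code gives $t=1$ --- are exactly where the hypothesis $w\le d$ and the feasibility assumption do their work.
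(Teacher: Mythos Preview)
Your overall plan coincides with the paper's: both act on solutions by $S_w$ permuting the first $w$ coordinates, and both extract from $A(n-1,d,w)<M-1$ that every coordinate is hit by $\mathcal C'$ (the paper's Lemma~2 is the slightly stronger statement that every column of the full code carries at least two $1$'s). Where you invoke orbit--stabiliser and bound the stabiliser via inner products and the partition inequality $\prod_j m_j!\le w!/\binom{w}{2}$, the paper instead proves a cleaner structural fact (its Lemma~4): under $w\le d$, no two non-$\mathbf p_0$ codewords share the same last $(n-w)$ block. This immediately yields what you work for --- any $\sigma$ fixing $\mathcal C'$ setwise must fix each codeword, since $\sigma(\mathbf c)$ has the same last block as $\mathbf c$ --- and moreover lets the paper bypass orbit--stabiliser altogether: fixing one codeword $\mathbf c$ with $i$ ones in its first block, its $\binom{w}{i}$ distinct $S_w$-images lie in $\binom{w}{i}$ distinct codes (each determined by the unique codeword sharing $\mathbf c$'s last block), and one then minimises over the admissible $i$. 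The borderline $d=w$ you flag is exactly where Lemma~4 is meant to replace your inner-product inequality; the paper's own proof of Lemma~4 asserts the strict $w-d/2>d/2$ where only $\ge$ is justified, so that edge is delicate in both treatments, as is the $w=2$ case in the first branch.
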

\noindent The proof is given in the Appendix~\ref{GILLIAM2021GROVER}.

We improve the quantum search algorithm using the obtained lower bound $\underline{t}$ of \eqref{eq:tlower}.
Traditionally, the number of Grover operators $L$ is uniformly drawn from the semi-open interval $[0, k)$, where $k$ is a parameter that increases in each iteration of GAS.
The probability of success $P_k(t)$ is known as~\cite{BOYER1998TIGHT}
\begin{equation}
    \label{eq:succeed_prob}
    P_k(t) = \frac{1}{2} - \frac{\sin(4k\theta)}{4k\sin(2\theta)}
\end{equation}
where
\begin{equation}
    \theta = \arcsin\qty(\sqrt{\frac{t}{2^{q_{1}}}}).
\end{equation}
Then, the optimal value $k_{\mathrm{opt}}$ is
\begin{align}
    \label{eq:k_opt}
    k_{\mathrm{opt}} &= \argmin_{k}\  \frac{k}{P_k(t)} \nonumber \\
    &= \argmin_{k}\  \frac{4k^2\sin(2 \theta)}{2k\sin(2\theta) - \sin(4k\theta)}
\end{align}
and is upper bounded by
\begin{equation}
    k_{\mathrm{opt}} \leq \overline{k}_{\mathrm{opt}} = \argmin_{k}\
    \frac{k}{P_k(\underline{t})}.
    \label{eq:koptup}
\end{equation}
\begin{thm}
\label{THM:UPPER_BOUND_K}
If the ratio $t/2^{q_1}$ is sufficiently small, then
\begin{equation}
    \overline{k}_{\mathrm{opt}} \in \qty[1, \left\lceil\frac{1+\sqrt{2}}{2} \sqrt{\frac{2^{q_1}}{\underline{t}}}\right\rceil].
\end{equation}
\end{thm}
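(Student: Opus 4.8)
The plan is to turn $\overline{k}_{\mathrm{opt}} = \argmin_{k} k/P_k(\underline{t})$ into a one‑variable optimization and then locate the minimizer explicitly. Writing $\theta = \arcsin\sqrt{\underline{t}/2^{q_1}}$ and substituting $u = 4k\theta$ in \eqref{eq:k_opt}, a short computation gives
\begin{equation}
  \frac{k}{P_k(\underline{t})} = \frac{4k^2\sin 2\theta}{2k\sin 2\theta - \sin 4k\theta} = \frac{1}{2\theta}\cdot\frac{u^2}{u - c\,\sin u}, \qquad c = \frac{2\theta}{\sin 2\theta} \ge 1 .
\end{equation}
Hence $\overline{k}_{\mathrm{opt}} = u^\star/(4\theta)$, where $u^\star$ minimizes $h(u) := u^2/(u - c\sin u)$ over the set on which the denominator is positive; for $c<\pi$ that set is $u>2\theta$, on which $h$ is continuous and tends to $+\infty$ at both ends, so a minimizer exists. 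Since $\underline{t}\le t$, the hypothesis forces $\theta$ small and $c$ close to $1$, and I work in that regime throughout.

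Next I would localize $u^\star$. The analytic core is the elementary inequality $\phi(u) := u^2 - \pi u + \pi\sin u \ge 0$ on $[0,\pi]$: one checks $\phi(0)=\phi(\pi)=0$ and $\phi'(0)=\phi'(\pi/2)=\phi'(\pi)=0$, and since $\phi''(u)=2-\pi\sin u$ changes sign so that $\phi'>0$ on $(0,\pi/2)$ and $\phi'<0$ on $(\pi/2,\pi)$, $\phi$ rises then falls between its two zeros. As $c\ge 1$ and $\sin u\ge 0$ on $(0,\pi]$, this yields $h(u)\ge\pi$ for $u\in(2\theta,\pi]$, with equality $h(\pi)=\pi$. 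On the far side, for $u > 2+2\sqrt{2}$ one has $u^2-\pi u > (2+2\sqrt{2})(2+2\sqrt{2}-\pi) > \pi c$ once $c<2$, hence $h(u)>\pi$ there. Finally $h'(\pi) = 1-c < 0$, so the global minimum of $h$ lies strictly below $\pi$; combined with the two bounds just derived, any global minimizer obeys $\pi < u^\star \le 2+2\sqrt{2}$.

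Translating back, $\tfrac{\pi}{4\theta} < \overline{k}_{\mathrm{opt}} = u^\star/(4\theta) \le \tfrac{2+2\sqrt{2}}{4\theta} = \tfrac{1+\sqrt{2}}{2\theta}$. Since $0<\sin\theta<\theta$ we have $1/\theta < 1/\sin\theta = \sqrt{2^{q_1}/\underline{t}}$, so the upper bound gives $\overline{k}_{\mathrm{opt}} < \tfrac{1+\sqrt{2}}{2}\sqrt{2^{q_1}/\underline{t}} \le \big\lceil \tfrac{1+\sqrt{2}}{2}\sqrt{2^{q_1}/\underline{t}}\,\big\rceil$; and once $\underline{t}/2^{q_1} = \sin^2\theta$ is small enough that $\theta<\pi/4$, we get $\overline{k}_{\mathrm{opt}} > \pi/(4\theta) > 1$. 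This places $\overline{k}_{\mathrm{opt}}$ in the asserted interval.

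The delicate step is the second paragraph: the ``sufficiently small'' hypothesis must simultaneously ensure $c>1$ (so that $h'(\pi)<0$ and the minimum is genuinely pushed past $u=\pi$), $c$ bounded (so the crude estimate beyond $u=2+2\sqrt{2}$ closes), and $\theta<\pi/4$ (for the left endpoint), while one also has to confirm that $u-c\sin u$ stays positive on the whole relevant range so that $h$ is well defined and attains its minimum there. Among these, only the inequality $\phi\ge 0$ needs a genuine (if short) argument rather than a one‑line estimate, and it is worth isolating as its own lemma.
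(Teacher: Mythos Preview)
Your argument is correct and lands on the same constant $\tfrac{1+\sqrt{2}}{2}$, but by a genuinely different route from the paper. The paper never changes variables: it keeps $h(k)=\tfrac{4k^2\alpha}{2k\alpha-\sin(4k\theta)}$ with $\alpha=\sin 2\theta$, replaces $\sin(4k\theta)$ by its envelope $\pm 1$ to get the sandwich $\tfrac{4k^2\alpha}{2k\alpha+1}\le h(k)\le\tfrac{4k^2\alpha}{2k\alpha-1}$, and then solves the purely algebraic inequality $\tfrac{4k_1^2\alpha}{2k_1\alpha-1}\le\tfrac{4k_2^2\alpha}{2k_2\alpha+1}$; choosing the pivot $k_1=1/\alpha$ makes the crossover threshold $\underline{k}_2=(1+\sqrt{2})/\alpha$, after which an approximation $\alpha\sim 2\sqrt{t/2^{q_1}}$ produces the stated bound. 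By contrast, you substitute $u=4k\theta$, reduce to minimizing $u^2/(u-c\sin u)$ with $c=2\theta/\sin 2\theta$, and then pin $u^\star\in(\pi,\,2+2\sqrt{2}]$ by combining the calculus lemma $\phi(u)=u^2-\pi u+\pi\sin u\ge 0$ on $[0,\pi]$, a crude bound beyond $2+2\sqrt{2}$, and $h'(\pi)=1-c<0$.

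What each buys: the paper's envelope argument is entirely algebraic and needs no auxiliary inequality, but the final step is only asymptotic (the ``$\sim$'' in $\alpha\sim 2\sqrt{t/2^{q_1}}$). Your route costs the $\phi\ge 0$ lemma but repays it with a sharper, non‑asymptotic conclusion: from $u^\star\le 2+2\sqrt{2}$ and the exact inequality $\theta>\sin\theta$ you get $\overline{k}_{\mathrm{opt}}<\tfrac{1+\sqrt{2}}{2}\sqrt{2^{q_1}/\underline{t}}$ strictly, and you also obtain the lower localization $u^\star>\pi$ for free. One cosmetic point: the relevant domain for $u$ is $u\ge 4\theta$ (since $k\ge 1$), not $u>2\theta$; this does not affect your bounds because $4\theta<\pi$ in the small‑ratio regime, but it would be cleaner to state it that way and note that $u/\sin u$ is increasing on $(0,\pi)$ to justify positivity of $u-c\sin u$.
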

\noindent The proof is given in the Appendix~\ref{app:thm_k}.
In general, $t/2^{q_1}$ is sufficiently small that it can be approximated as $(t/2^{q_1})^2 \sim 0$.
The search range of $\overline{k}_{\mathrm{opt}}$ can be limited within the closed interval using Theorem~\ref{THM:UPPER_BOUND_K}.
Then, we can search for $\overline{k}_{\mathrm{opt}}$ with negligible complexity; however, $\overline{k}_{\mathrm{opt}}$ itself is not expressed in a closed form.
For example, within the closed interval, root-finding algorithm such as the bisection method~\cite{Burden1989} or Newton-Raphson method~\cite{galantai2000theory} is used for the first-order derivative of $k / P_k(\underline{t})$, all the $k$ that take the extreme values are listed, and $\overline{k}_{\mathrm{opt}}$ is one of the listed values or one of both ends of the closed interval. Note that, the lower bound obtained in Theorem~\ref{THM:LOWER_BOUNDS_SOLUTION} requires assumption $w \leq d$ and $A(n-1,  d, w) < M-1$. Thus, this extra technique cannot be applied to all possible parameters of the problem. Since there has not been much research on the number of solutions for constant weight codes, more general and stronger bounds may be found in the future.
Moreover, Theorem 4 also holds for $k_\mathrm{opt}$ by replacing $\underline{t}$ with $t$. This result suggests that the value of $k$ need not be larger than $(1 + \sqrt{2})/2 \cdot \sqrt{2^{q_1}} \sim 1.207 \cdot \sqrt{2^{q_1}}$, even if the number of solutions is completely unknown. A similar technique was proposed in \cite{BARITOMPA2005GROVER} to set the upper bound of $k$ to $\sqrt{2^{q_1}}$. Theorem~\ref{THM:UPPER_BOUND_K} provides a rough basis for this technique.

The query complexity can be reduced by setting $\overline{k}_{\mathrm{opt}}$ as the upper bound on $k$.
Note that, although this paper only focuses on the lower bound, an upper bound on $t$ may be obtained for some problems, and a similar approach could speedup the quantum search algorithm more.

\begin{algorithm}[tb]
    \caption{Proposed GAS.\label{alg:proposed_GAS}}
    \begin{algorithmic}[1]
        \renewcommand{\algorithmicrequire}{\textbf{Input:}}
        \renewcommand{\algorithmicensure}{\textbf{Output:}}
        \REQUIRE $E'':\mathbb{F}_2^n\rightarrow\mathbb{Z}, \overline{E}_{\mathrm{min}}, \overline{k}_{\mathrm{opt}}, \lambda = 1.44$
        \ENSURE $\mathbf{x}$
        \STATE {Set $y_0 = \overline{E}_{\mathrm{min}}$, $k = 1$, and $i = 0$}.

        \REPEAT
        \STATE\hspace{\algorithmicindent}{Randomly select the rotation count $L_i$ from the set $\{0, ..., \lceil k-1 \rceil$\}}.
        \STATE\hspace{\algorithmicindent}{Evaluate $\mathbf{G}^{L_i} \mathbf{A}_{y_i} \Ket{0}_{n+m}$ to obtain $\mathbf{x}$ and $y$}.
        \hspace{\algorithmicindent}\IF{$y<y_i$}
        \STATE\hspace{\algorithmicindent}{$\mathbf{x}_{i+1}=\mathbf{x}, y_{i+1}=y,$ and $k = 1$}.
        \hspace{\algorithmicindent}\ELSE{\STATE\hspace{\algorithmicindent}{$\mathbf{x}_{i+1}=\mathbf{x}_i, y_{i+1}=y_i,$ and $k=\min{\{\lambda k, \overline{k}_{\mathrm{opt}} }\}$}}.
        \ENDIF
        \STATE{$i=i+1$}.
        \UNTIL{a termination condition is met}.
    \end{algorithmic}
\end{algorithm}
Overall, the proposed GAS is summarized in Algorithm~\ref{alg:proposed_GAS}.
The conventional GAS is initiated with a random solution $\mathbf{x}_0$ and a random threshold $y_0 = E''(\mathbf{x}_0)$.
In the proposed GAS, a strict threshold is set from the beginning, using the upper bound on the minimum of the objective function value, $\overline{E}'_{\mathrm{min}} + 1$ of \eqref{eq:y0Emin}.
Additionally, instead of the conventional upper bound $k \leq \sqrt{2^{q_1}}$, we use the stricter bound, $\overline{k}_{\mathrm{opt}}$ of \eqref{eq:koptup}.
These modifications further accelerate the efficient GAS algorithm.
Note that, in \cite{BARITOMPA2005GROVER}, the probability of a success is maximized with $\lambda = 1.44$ if the number of solutions is sufficiently smaller than the search space size (about 0.2$\%$ in \cite{BARITOMPA2005GROVER}).
In general, if the initial threshold is as strict as $\overline{E}'_{\mathrm{min}}$, the number of solutions becomes much smaller than the search space size.
Thus, we set $\lambda = 1.44$ in Algorithm~\ref{alg:proposed_GAS} following \cite{BARITOMPA2005GROVER}.

\subsection{Number of Quantum Gates}
We analyzed the number of quantum gates that determines the size of a quantum circuit, which closely relates to the feasibility of the proposed algorithm.
In the GAS circuit, the number of gates required for $\mathbf{A}_{y_i}$ has a dominant impact, which corresponds to the state preparation operator.
Thus, we only focus on $\mathbf{A}_{y_i}$ later.

As described in the first step of Section~\ref{SUBSEC:GAS}, the Hadamard gate $\mathbf{H}$ acts once on every qubit.
Thus, using \eqref{eq:bit_num_all}, the number of $\mathbf{H}$ gates involved in \eqref{eq:AyH} is
\begin{equation}
    G_\textrm{H} = q_1 + q_2 =  O\qty(q_{1} + \log(M)\log(w)).
\end{equation}
In the second step, the phase gate $\mathbf{R}$ is used to represent the coefficients in the objective function.
Using \eqref{eq:bit_num2_order}, the number of $\mathbf{R}$ gates involved in \eqref{eq:AyUG} is
\begin{equation}
    G_\textrm{R} = q_2 = O\qty(\log(q_1) + \log(M)\log(w)).
\end{equation}
The number of 1-controlled phase (1-CR) gates is the same as that of first-order terms in the objective function and is expressed as
\begin{equation}
    G_\textrm{1-CR} = q_1 \cdot q_2 = O\qty(q_1\qty(\log(q_1) + \log(M)\log(w))).
\end{equation}
Similarly, the number of 2-controlled phase (2-CR) gates is the same as that of second-order terms and is expressed by
\begin{equation}
    G_\textrm{2-CR} = \frac{q_1(q_1 - 1)}{2} \cdot q_2 = O\qty(q_1^2\qty(\log(q_1) + \log(M)\log(w))).
\end{equation}
\begin{table}[tb]
    \centering
    \caption{Number of quantum gates required by $\mathbf{A}_{y_i}$. \label{table:gate}}
    \begin{tabular}{lll}
        \hline
        Gate & Number of gates & \\
        \hline
        H & $q_{1}+q_{2}$ & $=O(q_{1} + \log M\log w)$\\
        R & $q_{2}$ & $=O(\log q_{1} + \log M \log w)$ \\
        1-CR & $q_{1} \cdot q_{2}$ & $=O\qty(q_{1}\qty(\log q_{1} + \log M \log w))$ \\
        2-CR & $q_{1}(q_{1}-1) \cdot q_{2} /2$ & $=O\qty(q_{1}^2\qty(\log q_{1} + \log M \log w))$ \\
        IQFT & 1 & $=O(1)$ \\
        \hline
    \end{tabular}
\end{table}
Based on the above analysis, the number of gates required to implement the state preparation operator $\mathbf{A}_{y_i}$ is summarized in Table~\ref{table:gate}.

\subsection{Specific Example}\label{subsec:example}

\begin{figure*}[tb]
\begin{equation}
\label{eq:QUBO_matrix}
\footnotesize
\mathbf{Q}=
\smqty[-176 & 64 & 64 & 64 & 64 & 33 & 64 & 33 & 33 & 33 & 33 & 32 & 64 & 33 & 33 & 33 & 33 & 32 & 64 & 64 & 33 & 33 \\
0 & -176 & 64 & 64 & 33 & 64 & 33 & 64 & 33 & 33 & 32 & 33 & 33 & 64 & 33 & 33 & 32 & 33 & 64 & 33 & 64 & 33 \\
0 & 0 & -176 & 33 & 64 & 64 & 33 & 33 & 64 & 32 & 33 & 33 & 33 & 33 & 64 & 32 & 33 & 33 & 33 & 64 & 64 & 33 \\
0 & 0 & 0 & -176 & 64 & 64 & 33 & 33 & 32 & 64 & 33 & 33 & 33 & 33 & 32 & 64 & 33 & 33 & 64 & 33 & 33 & 64 \\
0 & 0 & 0 & 0 & -176 & 64 & 33 & 32 & 33 & 33 & 64 & 33 & 33 & 32 & 33 & 33 & 64 & 33 & 33 & 64 & 33 & 64 \\
0 & 0 & 0 & 0 & 0 & -176 & 32 & 33 & 33 & 33 & 33 & 64 & 32 & 33 & 33 & 33 & 33 & 64 & 33 & 33 & 64 & 64 \\
0 & 0 & 0 & 0 & 0 & 0 & -176 & 64 & 64 & 64 & 64 & 33 & 64 & 33 & 33 & 33 & 33 & 32 & 64 & 64 & 33 & 33 \\
0 & 0 & 0 & 0 & 0 & 0 & 0 & -176 & 64 & 64 & 33 & 64 & 33 & 64 & 33 & 33 & 32 & 33 & 64 & 33 & 64 & 33 \\
0 & 0 & 0 & 0 & 0 & 0 & 0 & 0 & -176 & 33 & 64 & 64 & 33 & 33 & 64 & 32 & 33 & 33 & 33 & 64 & 64 & 33 \\
0 & 0 & 0 & 0 & 0 & 0 & 0 & 0 & 0 & -176 & 64 & 64 & 33 & 33 & 32 & 64 & 33 & 33 & 64 & 33 & 33 & 64 \\
0 & 0 & 0 & 0 & 0 & 0 & 0 & 0 & 0 & 0 & -176 & 64 & 33 & 32 & 33 & 33 & 64 & 33 & 33 & 64 & 33 & 64 \\
0 & 0 & 0 & 0 & 0 & 0 & 0 & 0 & 0 & 0 & 0 & -176 & 32 & 33 & 33 & 33 & 33 & 64 & 33 & 33 & 64 & 64 \\
0 & 0 & 0 & 0 & 0 & 0 & 0 & 0 & 0 & 0 & 0 & 0 & -176 & 64 & 64 & 64 & 64 & 33 & 64 & 64 & 33 & 33 \\
0 & 0 & 0 & 0 & 0 & 0 & 0 & 0 & 0 & 0 & 0 & 0 & 0 & -176 & 64 & 64 & 33 & 64 & 64 & 33 & 64 & 33 \\
0 & 0 & 0 & 0 & 0 & 0 & 0 & 0 & 0 & 0 & 0 & 0 & 0 & 0 & -176 & 33 & 64 & 64 & 33 & 64 & 64 & 33 \\
0 & 0 & 0 & 0 & 0 & 0 & 0 & 0 & 0 & 0 & 0 & 0 & 0 & 0 & 0 & -176 & 64 & 64 & 64 & 33 & 33 & 64 \\
0 & 0 & 0 & 0 & 0 & 0 & 0 & 0 & 0 & 0 & 0 & 0 & 0 & 0 & 0 & 0 & -176 & 64 & 33 & 64 & 33 & 64 \\
0 & 0 & 0 & 0 & 0 & 0 & 0 & 0 & 0 & 0 & 0 & 0 & 0 & 0 & 0 & 0 & 0 & -176 & 33 & 33 & 64 & 64 \\
0 & 0 & 0 & 0 & 0 & 0 & 0 & 0 & 0 & 0 & 0 & 0 & 0 & 0 & 0 & 0 & 0 & 0 & -176 & 64 & 64 & 64 \\
0 & 0 & 0 & 0 & 0 & 0 & 0 & 0 & 0 & 0 & 0 & 0 & 0 & 0 & 0 & 0 & 0 & 0 & 0 & -176 & 64 & 64 \\
0 & 0 & 0 & 0 & 0 & 0 & 0 & 0 & 0 & 0 & 0 & 0 & 0 & 0 & 0 & 0 & 0 & 0 & 0 & 0 & -176 & 64 \\
0 & 0 & 0 & 0 & 0 & 0 & 0 & 0 & 0 & 0 & 0 & 0 & 0 & 0 & 0 & 0 & 0 & 0 & 0 & 0 & 0 & -176 ]
\end{equation}
\normalsize
\hrulefill
\vspace*{4pt}
\end{figure*}

\begin{figure*}[tb]
    \centering
    \includegraphics[keepaspectratio,scale=0.7]{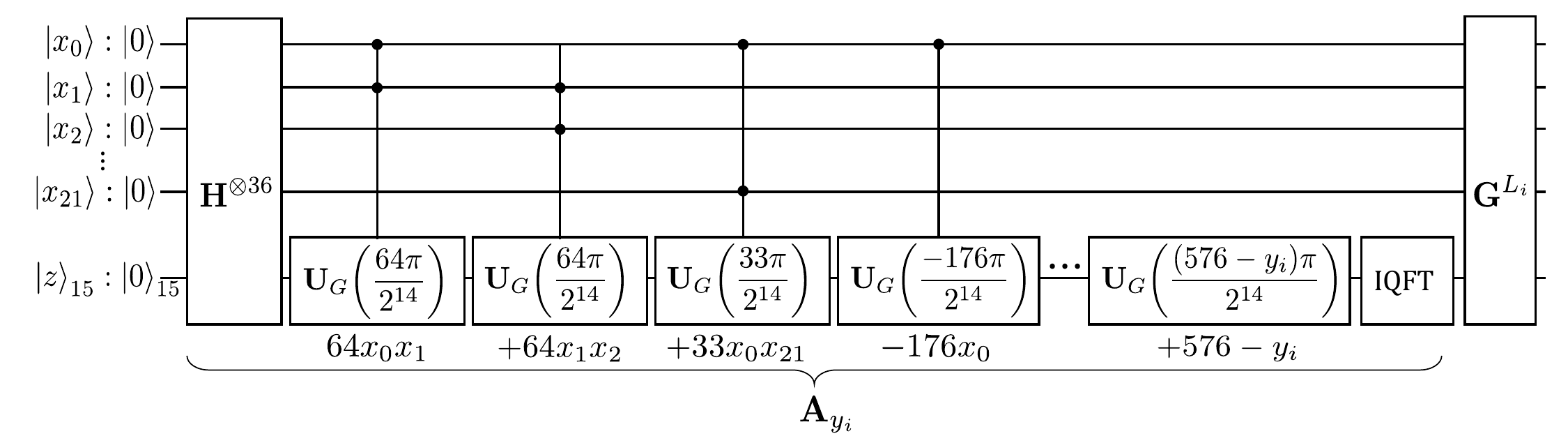}
    \caption{Quantum circuit corresponding to $E''(\mathbf{x}) = \mathbf{x}^{\mathrm{T}} \mathbf{Q} \mathbf{x} + 576 - y_i$.}
    \label{fig:circuit2}
\end{figure*}
Let us consider a specific example $(n, w, d, M) = (7, 3, 4, 7)$ of the proposed GAS.
In this case, the number of qubits is calculated as $q = q_1 + q_2'' = 22 + 15 = 37$ from \eqref{eq:bit_num1} and \eqref{eq:bit_num3}.
The objective function is $E''(\mathbf{x}) = \mathbf{x}^{\mathrm{T}} \mathbf{Q} \mathbf{x} + 576$, where the QUBO matrix $\mathbf{Q} \in \mathbb{Z}^{q_1 \times q_1}$ is given in \eqref{eq:QUBO_matrix}, and Fig.~\ref{fig:circuit2} shows the corresponding quantum circuit.
Based on the combinatorial matrix $\mathbf{P}(7, 3)$, the reduced matrix $\mathbf{P}'(7, 3) \in \mathbb{F}_2^{q_1 \times n}$ is generated by removing row vectors whose Hamming distance to $\mathbf{p}_0 = [1\ 1\ 1\ 0\ 0\ 0\ 0]$ is less than $d = 4$ as follows:
\begin{align}
&\mathbf{P}'(7, 3) \nonumber \\
=&
\mqty[
1~1~1~1~1~1~0~0~0~0~0~0~0~0~0~0~0~0~0~0~0~0 \\
0~0~0~0~0~0~1~1~1~1~1~1~0~0~0~0~0~0~0~0~0~0 \\
0~0~0~0~0~0~0~0~0~0~0~0~1~1~1~1~1~1~0~0~0~0 \\
1~1~1~0~0~0~1~1~1~0~0~0~1~1~1~0~0~0~1~1~1~0 \\
1~0~0~1~1~0~1~0~0~1~1~0~1~0~0~1~1~0~1~1~0~1 \\
0~1~0~1~0~1~0~1~0~1~0~1~0~1~0~1~0~1~1~0~1~1 \\
0~0~1~0~1~1~0~0~1~0~1~1~0~0~1~0~1~1~0~1~1~1
]^{\mathrm{T}}.
\end{align}
The proposed GAS outputs one of the optimal solutions
\begin{align}
\label{eq:x_opt}
\mathbf{x}_{\mathrm{opt}}  =
\mqty[
1 ~ 0 ~ 0 ~ 0 ~ 0 ~ 1 ~ 0 ~ 1 ~ 0 ~ 0 ~ 1 ~ 0 ~ 0 ~ 0 ~ 1 ~ 1 ~ 0 ~ 0 ~ 0 ~ 0 ~ 0 ~ 0
]^{\mathrm{T}},
\end{align}
which indicates that the optimal code consists of the first, sixth, eighth, 11th, 15th, and 16th rows of $\mathbf{P}'(7, 3)$ should be used, in addition to $\mathbf{p}_0 = [1\ 1\ 1\ 0\ 0\ 0\ 0]$.
Finally, the optimal code corresponding to \eqref{eq:x_opt} in a matrix format is
\begin{align}
\C_{\mathrm{opt}} = \mqty[1 & 1 & 1 & 0 & 0 & 0 & 0 \\
1 & 0 & 0 & 1 & 1 & 0 & 0 \\
1 & 0 & 0 & 0 & 0 & 1 & 1 \\
0 & 1 & 0 & 1 & 0 & 1 & 0 \\
0 & 1 & 0 & 0 & 1 & 0 & 1 \\
0 & 0 & 1 & 1 & 0 & 0 & 1 \\
0 & 0 & 1 & 0 & 1 & 1 & 0].
\end{align}

\section{Performance Comparisons}
In this section, we compare the proposed GAS with the conventional GAS in terms of query complexity and demonstrate that it converges to the optimal solutions faster.

\paragraph*{Evaluation Metrics}
In the following, the query complexity in the quantum domain corresponds to the total number of Grover operators, i.e., $\sum_i L_i$, and the query complexity in the classical domain corresponds to the number of measurements of the quantum states.
Both metrics are identical to those adopted in \cite{botsinis2014fixedcomplexity}.

\paragraph*{Considered Schemes}
The classic exhaustive search is considered as a reference scheme.
The original search space size is calculated as $\binom{\binom{n}{w}}{M}$.
Using Theorem~\ref{thm:uniformity} and $\mathbf{P}'(n, w)$, the search space size can be readily reduced to $\binom{q_1}{M-1}$, where the first codeword is fixed to $\mathbf{p}_0$ and the remaining $M-1$ codewords are selected from $q_1$ candidates.
The conventional GAS (Algorithm~\ref{alg:conv_GAS}) is considered as a performance baseline.
The proposed GAS (Algorithm~\ref{alg:proposed_GAS}) relies on the derived bounds, which contribute to increasing the speedup.
Both classical exhaustive search and conventional GAS used the proposed objective function $E'(\mathbf{x})$ of \eqref{eq:obj_func2}, while the proposed GAS used the objective function $E''(\mathbf{x})$ of \eqref{eq:obj_func3}.

\paragraph*{Simulation Parameters}
All the parameters are the same as those used in Section~\ref{subsec:example}.
The objective function values were averaged over $10^6$ trials.
The termination condition of GAS can be defined by, for example, the number of iterations, time, or threshold value.
Since the simulation was performed as a benchmark in this paper, we defined the termination condition that the threshold value $y_i$ reaches the minimum objective function value.
We assume the realization of a fault-tolerant quantum computer.

\begin{figure}[tb]
    \centering
    \subfigure[Classical domain.]{\includegraphics[keepaspectratio,scale=0.7]{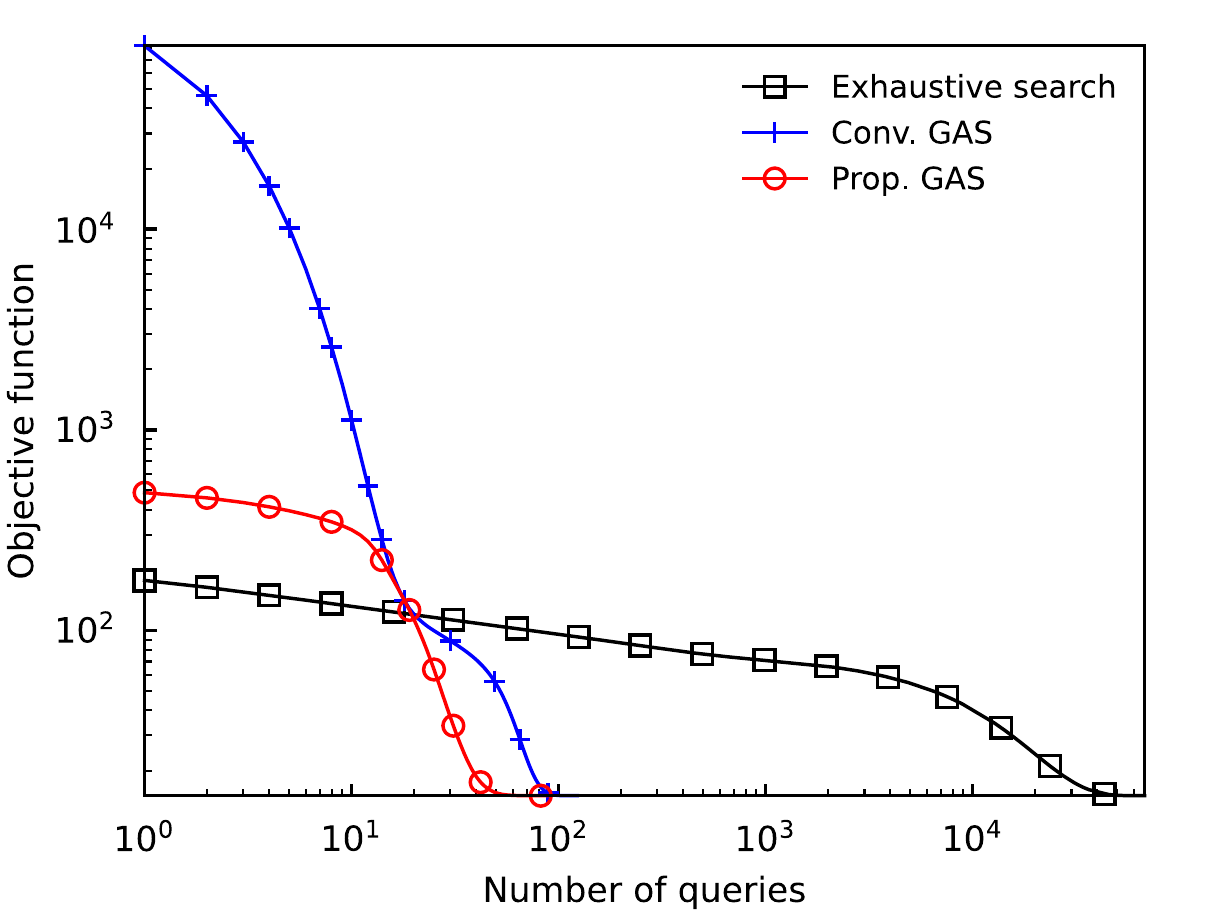}}
    \subfigure[Quantum domain.]{\includegraphics[keepaspectratio,scale=0.7]{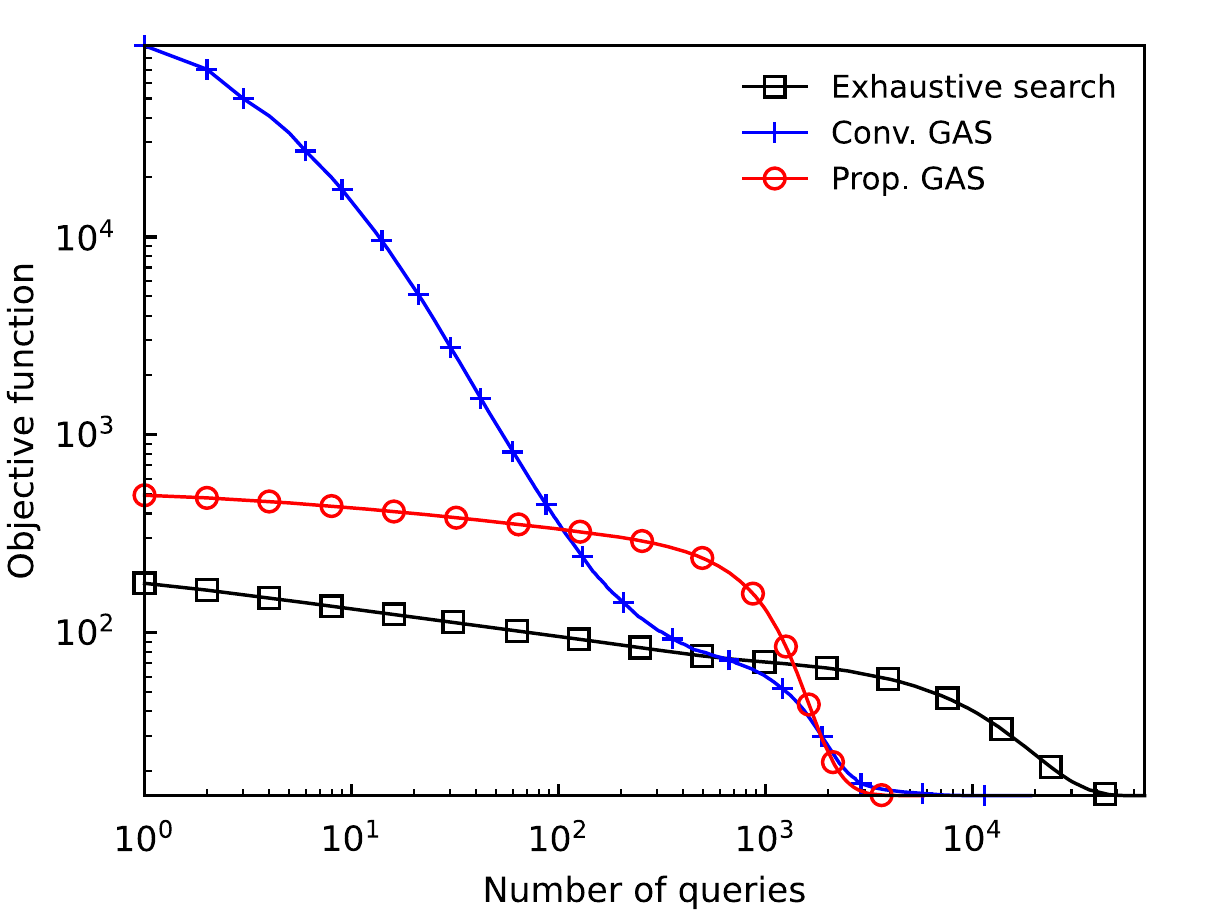}}
    \caption{Relationship between query complexity and the average of the objective function values, where we had $(n, w, d, M) = (7, 3, 4, 7)$.}
    \label{fig:obj_transition}
\end{figure}
First, Fig.~\ref{fig:obj_transition} shows the relationship between query complexity and the average of the objective function values, where query complexity was evaluated in both the classical and quantum domains.
As shown in Fig.~\ref{fig:obj_transition}, the objective function values for the conventional and proposed GAS differed significantly.
By suppressing the objective function value, the number of qubits was reduced from $q_2' = 22$ to $q_2'' = 15$.
Additionally, the proposed GAS exhibited the fastest convergence performance among the considered schemes in both the classical and quantum domains.
In regions where the number of queries was relatively small, the classical exhaustive search had smaller objective function values than those of the other GAS-based schemes.
This is because the proposed formulation generates a larger search space than the classical exhaustive search, i.e., $2^{q_1} > \binom{q_1}{M-1}$, which is a major drawback of our study.

The question arises ``why, despite this drawback, does the proposed GAS converge fast, especially in the classical domain?"
One major reason is the quadratic speedup of GAS with the aid of quantum superposition, entanglement, and amplitude amplification.
Another reason is that the proposed GAS starts with the strict initial threshold, $y_0 = \overline{E}'_{\mathrm{min}} + 1$ of \eqref{eq:y0Emin}, and the number of Grover operators $L_i$ is chosen from a limited range, $[0, \bar{k}_{\mathrm{opt}})$.
These proposals lead to fewer states of interest and higher probability of success, resulting in faster convergence.

In Fig.~\ref{fig:obj_transition}, we simply averaged $10^6$ curves for each scheme.
After converging to the minimum objective function value, which was 15, the value of each curve remained the same.
That is, in Fig.~\ref{fig:obj_transition}, the number of queries where each scheme reached the minimum was the maximum of observed $10^6$ query complexities.
For example, the classical exhaustive search reached the minimum objective function value at the query complexity of approximately $5.0 \cdot 10^4$, and this complexity was the maximum of all the trials.
The query complexity required to reach the minimum is an important metric and is a stochastic variable.
Its statistical property is investigated in Fig.~\ref{fig:CDF}.

\begin{figure}[tb]
    \centering
    \subfigure[Classical domain.]{\includegraphics[keepaspectratio,scale=0.7]{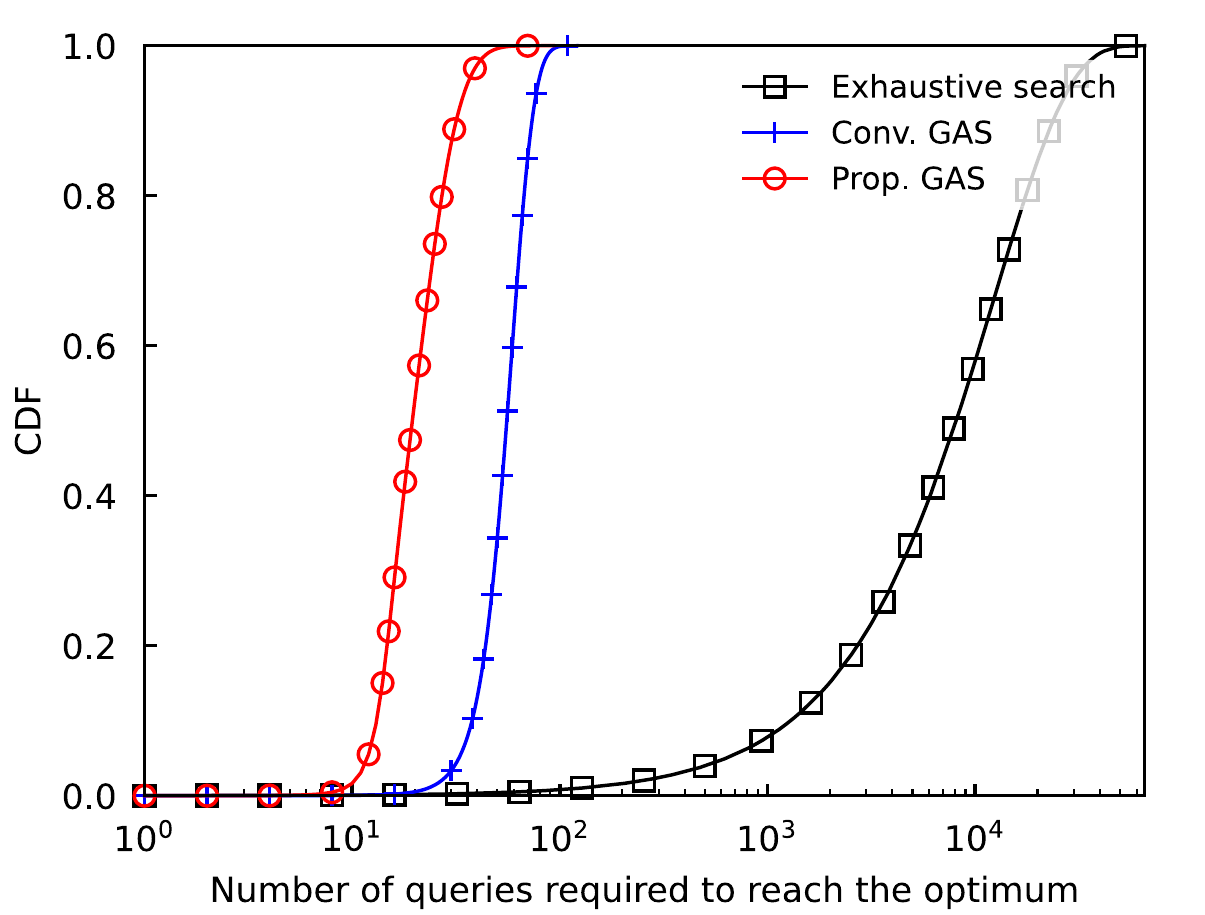}}
    \subfigure[Quantum domain.]{\includegraphics[keepaspectratio,scale=0.7]{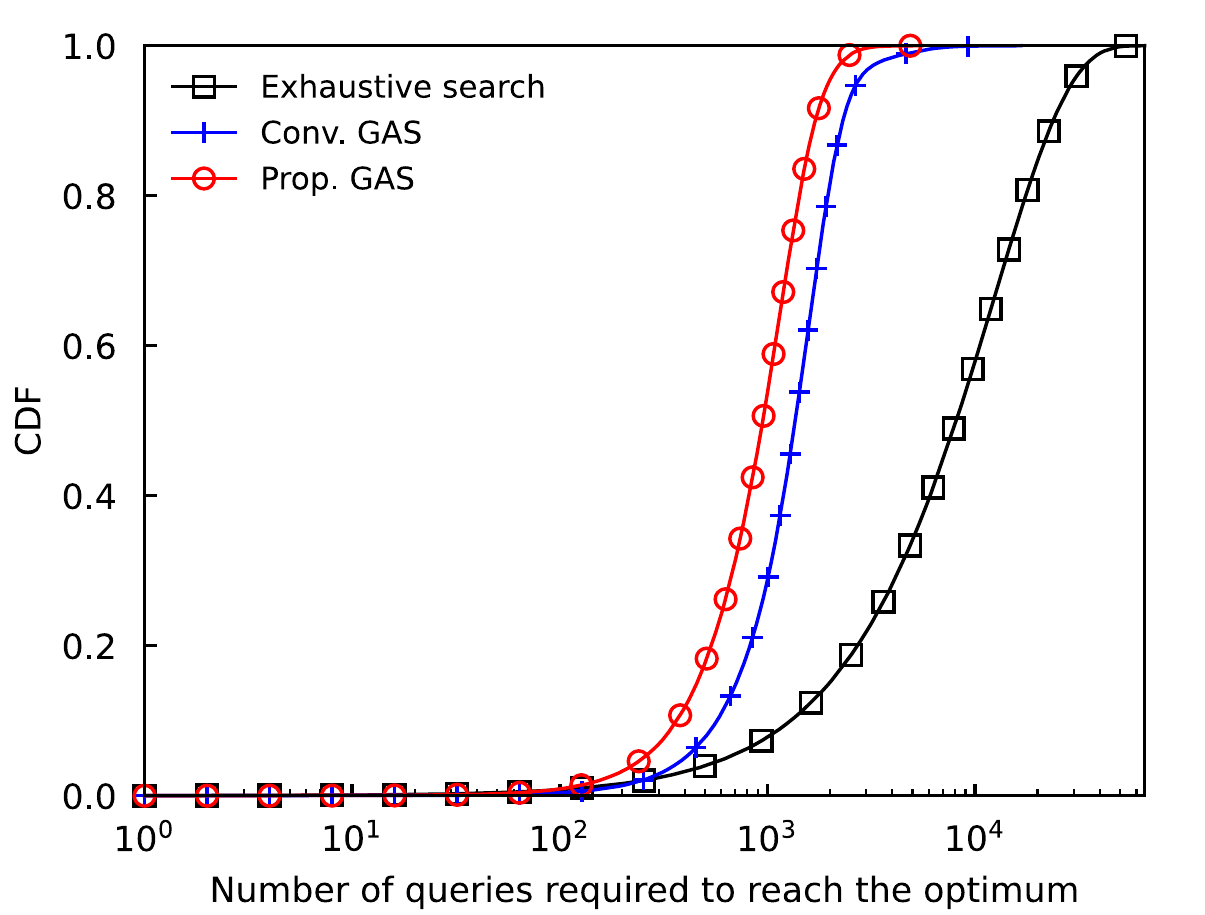}}
    \caption{CDF of query complexity required to reach the optimal solution, where we had $(n, w, d, M) = (7, 3, 4, 7)$.}
    \label{fig:CDF}
\end{figure}
In Fig.~\ref{fig:CDF}, the cumulative distribution function (CDF) of the query complexity required to reach the optimum was investigated.
In Fig.~\ref{fig:obj_transition}, the difference between the conventional GAS and the proposed GAS appears to be small, but when compared with CDF in Fig.~\ref{fig:CDF}, the difference is clear.
This result indicates that on average the proposed GAS reaches the optimal solution faster than the conventional one in both the classical and quantum domains.

\section{Conclusions}
Although much work has been done on binary constant weight codes, its parameter-dependent combinatorial explosion is still a problem. In this paper, we proposed a quantum search algorithm for binary constant weight codes as a solution.
The search problem was newly formulated as a QUBO problem, and novel bounds were derived for the minimum of the objective function value and the exact number of solutions.
We proposed a modified algorithm for GAS using the derived bounds and showed that it can further accelerate the quadratic speedup while reducing the number of qubits.
Similar approaches that utilize problem-specific bounds, such as those in this study, have the potential for wide use in applications of quantum search algorithms to other problems, thereby improving the feasibility of quantum speedup.

\section*{Acknowledgment}
IBM and Qiskit are trademarks of International Business Machines Corporation.

\appendices
\section{Proof of Theorem~\ref{THM:LOWER_BOUNDS_SOLUTION} \label{GILLIAM2021GROVER}}
We regard a code as a matrix $\mathbb{F}_2^{M \times n}$ where each row vector is a codeword containing $w$ 1s.
As mentioned in the proof of Theorem~\ref{thm:uniformity}, if we have a code that satisfies a given condition $(n, w, M, d)$, a new code generated by swapping any two columns also satisfies the same condition $(n, w, M, d)$.
Using this property, we prove that new codes can be generated by reordering the first $w$ columns of an existing code.
In the following, we state the lemmas necessary to prove Theorem~\ref{THM:LOWER_BOUNDS_SOLUTION}.
Note that $\mathbf{p}_0$ is the first row of the combinatorial matrix $\mathbf{P}(n, w)$ of \eqref{eq:Pnw}, and, hereinafter, it is assumed to be included in the code as a codeword without loss of generality.

\begin{lem}
If $A(n-1, d, w) < M$, then each column of a code that satisfies the condition $(n, w, M, d)$ contains at least one 1.
\end{lem}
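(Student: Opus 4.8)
The plan is to argue by contradiction on the column weights of the code. Suppose some column, say column $j$, of a code $\C$ satisfying the condition $(n, w, M, d)$ contains only $0$s. Then deleting column $j$ from $\C$ produces an $M \times (n-1)$ binary matrix in which every row still has weight $w$ (no $1$ was removed), and the Hamming distance between any two rows is unchanged, since we only deleted a coordinate where both rows agreed with value $0$. Hence the punctured code is a binary constant weight code with parameters $(n-1, d, w)$ and $M$ codewords, which shows $A(n-1, d, w) \geq M$. This contradicts the hypothesis $A(n-1, d, w) < M$, so no all-zero column can exist; equivalently, every column contains at least one $1$.

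First I would make precise the puncturing operation and verify the three invariants it preserves: row weight (unaffected because the deleted coordinate is $0$ in every row), pairwise Hamming distance (unaffected because the deleted coordinate contributes $0$ to $d(\c_m,\c_{m'})$ whenever it is $0$ in both rows), and the number of codewords $M$ (rows are not merged, since if two rows were equal after deletion they would already be equal before deletion, as they agreed in column $j$). Next I would conclude that the resulting matrix is a valid constant weight code attaining at least $M$ codewords with length $n-1$, minimum distance at least $d$, and weight $w$, and invoke the definition of $A(n-1, d, w)$ as the maximum such $M$. The contradiction with $A(n-1, d, w) < M$ then finishes the argument.

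The only mildly delicate point — and the main obstacle, such as it is — is the claim that puncturing does not decrease the number of distinct codewords: one must observe that two codewords of $\C$ that become identical after deleting the all-zero column $j$ must already have been identical in $\C$, because they necessarily agreed (both $0$) in column $j$. Everything else is immediate from the definitions. I would state this as a one-line observation rather than a separate lemma.
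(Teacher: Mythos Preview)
Your proposal is correct and follows essentially the same contradiction-by-puncturing argument as the paper: assume an all-zero column, delete it to obtain a code with parameters $(n-1, w, M, d)$, and conclude $A(n-1, d, w) \geq M$, contradicting the hypothesis. Your write-up is in fact slightly more careful than the paper's, since you explicitly verify that puncturing preserves the row weights, pairwise distances, and the distinctness of the $M$ codewords, whereas the paper simply asserts that the punctured code satisfies the condition $(n-1, w, M, d)$.
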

\begin{proof}
Suppose that a code that satisfies the condition $(n, w, M, d)$ exists, and a column of that code contains only 0s.
Then, a code generated by removing that column satisfies the condition $(n-1, w, M, d)$.
The necessary and sufficient condition for such a code to exist is $A(n-1, d, w) \geq M$, which violates the assumption $A(n-1, d, w) < M$.
\end{proof}

\begin{lem}
\label{lem:two_1}
If $A(n-1, d, w) < M-1$, then each column of a code that satisfies the condition $(n, w, M, d)$ contains at least two 1s.
\end{lem}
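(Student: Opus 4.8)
The plan is to mimic the proof of Lemma~1, but with the extra care needed because simply deleting a column that carries a single $1$ would destroy the constant-weight property of one codeword. First I would invoke Lemma~1: the hypothesis $A(n-1,d,w) < M-1$ implies $A(n-1,d,w) < M$ (using $M \geq 2$), so every column of any code satisfying the condition $(n,w,M,d)$ already contains at least one $1$. It therefore suffices to rule out the possibility that some column contains \emph{exactly} one $1$.

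Suppose, for contradiction, that a code $\C$ satisfying $(n,w,M,d)$ has a column $j$ in which only a single entry equals $1$, say the entry belonging to codeword $\c_m$. I would then form a new array by deleting \emph{both} column $j$ and the row $\c_m$. This array has $M-1$ rows and length $n-1$. Since column $j$ is $0$ in every one of the remaining codewords, deleting it changes neither the weight of any of those $M-1$ codewords (each is still $w$) nor any pairwise Hamming distance among them. Hence the minimum distance of the new array is a minimum taken over a subset of the original pairs, each of which was at least $d$, so it is still at least $d$.

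Consequently a binary constant weight code of length $n-1$, weight $w$, with $M-1$ codewords and minimum distance at least $d$ exists, i.e., $A(n-1,d,w) \geq M-1$, which contradicts the hypothesis $A(n-1,d,w) < M-1$. Therefore no column of $\C$ can contain exactly one $1$; combined with Lemma~1, every column contains at least two $1$s.

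I expect the only real subtlety — and the step worth spelling out carefully — to be this simultaneous deletion of the offending column together with its unique supporting codeword, and the accompanying verification that the operation preserves constant weight $w$ and does not decrease the minimum distance below $d$. Once that bookkeeping is in place, the contradiction with $A(n-1,d,w)<M-1$ is immediate, in exact parallel with the argument used for Lemma~1.
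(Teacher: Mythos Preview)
Your proof is correct and follows essentially the same approach as the paper: invoke Lemma~1 to rule out all-zero columns, then suppose some column has exactly one $1$, delete that column together with the unique row supporting it, and observe that the resulting $(n-1)$-length, weight-$w$ code with $M-1$ codewords and minimum distance at least $d$ contradicts $A(n-1,d,w)<M-1$. Your write-up is in fact more careful than the paper's, since you explicitly verify that the deletion preserves constant weight and does not decrease the minimum distance (the paper simply asserts the resulting code satisfies $(n-1,w,M-1,d)$); the parenthetical ``using $M\ge 2$'' is unnecessary, as $A(n-1,d,w)<M-1$ already implies $A(n-1,d,w)<M$.
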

\begin{proof}
According to Lemma~1, a column that contains only 0s in a code does not exist under the condition $(n-1, w, M, d)$.
Then, we prove that no column contains only one 1.
Suppose that there exists a code that contains only one 1 in a column.
A code generated by removing that column and that row, corresponding to the position of the contained 1, satisfies the condition $(n-1, w, M-1, d)$.
The necessary and sufficient condition for such a code is $A(n-1, d, w) \geq M-1$, which violates the assumption $A(n-1, d, w) < M-1$.
\end{proof}

\begin{lem}
\label{lem:lowerbound_of_1}
The first $w$ columns of any codeword except for $\mathbf{p}_0$ contain at most $w-\frac{d}{2}$ 1s.
\end{lem}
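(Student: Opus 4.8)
The plan is to obtain this as an immediate consequence of Theorem~1 combined with the special shape of $\mathbf{p}_0$. Recall that $\mathbf{p}_0 = [1 ~ \cdots ~ 1 ~ 0 ~ \cdots ~ 0]$, the first row of the combinatorial matrix $\mathbf{P}(n, w)$ in \eqref{eq:Pnw}, carries its $w$ ones precisely in coordinates $1, \ldots, w$. Hence, for any codeword $\c$ of the code, the inner product $\expval{\mathbf{p}_0, \c}$ counts exactly the number of ones that $\c$ places among its first $w$ coordinates. So the lemma is equivalent to the statement $\expval{\mathbf{p}_0, \c} \leq w - \frac{d}{2}$ for every codeword $\c \neq \mathbf{p}_0$, and the task reduces to bounding this inner product.

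First I would invoke Theorem~1: since every codeword has constant weight $w$, we have $d(\mathbf{p}_0, \c) = 2\qty(w - \expval{\mathbf{p}_0, \c})$. Next, since the code has minimum distance at least $d$ and $\c \neq \mathbf{p}_0$, the left-hand side is at least $d$, so $2\qty(w - \expval{\mathbf{p}_0, \c}) \geq d$, which rearranges to $\expval{\mathbf{p}_0, \c} \leq w - \frac{d}{2}$. By the identification in the previous paragraph, this is precisely the bound on the number of ones of $\c$ in its first $w$ columns, completing the argument. (In effect this is just inequality~\eqref{eq:d_condition} specialized to the pair $(\mathbf{p}_0, \c)$.)

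There is essentially no technical obstacle: the only point that deserves a careful sentence is the observation that $\expval{\mathbf{p}_0, \c}$ equals the count of ones of $\c$ restricted to the first $w$ coordinates, which is exactly what lets Theorem~1 be applied to this single column block. I would also note that this lemma needs neither hypothesis of Theorem~\ref{THM:LOWER_BOUNDS_SOLUTION} (it holds for any constant-weight code with minimum distance at least $d$ that contains $\mathbf{p}_0$); it is stated here because it is the combinatorial input for counting how many distinct codes arise from reordering the first $w$ columns, which is the route used to prove Theorem~\ref{THM:LOWER_BOUNDS_SOLUTION}.
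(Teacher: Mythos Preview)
Your proof is correct and essentially identical to the paper's: both identify $\expval{\mathbf{p}_0,\c}$ with the number of ones of $\c$ in the first $w$ coordinates and then invoke the inner-product bound \eqref{eq:d_condition} (which is exactly the rearranged Theorem~1 inequality you wrote out). The only cosmetic difference is that the paper phrases it as a proof by contradiction while you argue directly.
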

\begin{proof}
Except for $\mathbf{p}_0$, suppose that a codeword exists such that the first $w$ columns contain more 1s than $w-\frac{d}{2}$.
The inner product between that codeword and $\mathbf{p}_0$ is greater than $w-\frac{d}{2}$. From \eqref{eq:d_condition}, this violates the condition that the minimum distance of the code is at least $d$.
\end{proof}

\begin{cor}
\label{cor:upperbound_of_1}
The last $n-w$ columns of any codeword except for $\mathbf{p}_0$ contain at least $\frac{d}{2}$ 1s.
\end{cor}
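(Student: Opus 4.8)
The plan is to obtain Corollary~\ref{cor:upperbound_of_1} as an immediate complement of Lemma~\ref{lem:lowerbound_of_1}, using only the fact that the code has constant weight $w$. First I would fix a code satisfying the condition $(n, w, M, d)$ with $\mathbf{p}_0$ among its codewords (without loss of generality, as justified via Theorem~\ref{thm:uniformity}), and pick any codeword $\c$ with $\c \neq \mathbf{p}_0$. Partition the $n$ coordinate positions into the first block $\{0, \ldots, w-1\}$ and the last block $\{w, \ldots, n-1\}$, and let $a$ be the number of $1$s of $\c$ in the first block and $b$ the number of $1$s in the last block.

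Next I would combine two facts. Since every codeword has Hamming weight exactly $w$, we have $a + b = w$. By Lemma~\ref{lem:lowerbound_of_1}, the first $w$ columns of $\c$ carry at most $w - \tfrac{d}{2}$ ones, i.e.\ $a \leq w - \tfrac{d}{2}$. Substituting, $b = w - a \geq w - \left(w - \tfrac{d}{2}\right) = \tfrac{d}{2}$, which is exactly the asserted lower bound on the number of $1$s in the last $n-w$ columns.

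I do not expect any real obstacle: the argument is a one-line counting complement of the preceding lemma. The only points to keep in mind are (i) that $d$ is even for any nontrivial constant-weight code — this follows from \eqref{eq:hamming_distance}, since $d(\c_m,\c_{m'}) = 2(w - \expval{\c_m, \c_{m'}})$ — so $\tfrac{d}{2}$ is an integer and the statement is meaningful, and (ii) that the corollary inherits the standing assumption of Lemma~\ref{lem:lowerbound_of_1} that $\mathbf{p}_0$ is in the code, which is harmless by the column-swapping argument of Theorem~\ref{thm:uniformity}. The bound is attained precisely by those codewords $\c$ for which $\expval{\c, \mathbf{p}_0} = w - \tfrac{d}{2}$ holds with equality.
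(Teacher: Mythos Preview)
Your argument is correct and is exactly the paper's approach: the paper's proof simply notes that each codeword contains $w$ ones and invokes Lemma~\ref{lem:lowerbound_of_1}, which is precisely your $a+b=w$ combined with $a\le w-\tfrac{d}{2}$. Your additional remarks on evenness of $d$ and the standing assumption about $\mathbf{p}_0$ are accurate but not needed for the core step.
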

\begin{proof}
Each codeword contains $w$ 1s. Thus, it is clear from Lemma~\ref{lem:lowerbound_of_1}.
\end{proof}
\begin{lem}
\label{lem:unexist}
If $w \leq d$, then there is no pair of codewords such that the last $n-w$ columns are identical.
\end{lem}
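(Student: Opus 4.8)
The plan is to argue by contradiction: suppose a code satisfying the condition $(n,w,M,d)$ contains two distinct codewords $\c$ and $\c'$ that coincide on all of the last $n-w$ columns.

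The first thing I would do is dispose of the possibility that one of the two is $\mathbf{p}_0$. Since $\mathbf{p}_0=[1\ \cdots\ 1\ 0\ \cdots\ 0]$ vanishes on the last $n-w$ columns, $\c=\mathbf{p}_0$ would force $\c'$ to vanish there as well, so that all $w$ of its ones lie in the first $w$ columns; but the only codeword with that property is $\mathbf{p}_0$ itself, contradicting $\c\neq\c'$. Hence both $\c$ and $\c'$ differ from $\mathbf{p}_0$.

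Next I would compare the two codewords on the first $w$ columns. As they have equal (constant) weight $w$ and agree on the last $n-w$ columns, they carry the same number $a$ of ones on the first $w$ columns, and Lemma~\ref{lem:lowerbound_of_1} gives $a\leq w-\tfrac{d}{2}$. Because the two codewords agree outside the first $w$ columns, $d(\c,\c')$ equals the Hamming distance between their restrictions to those columns, which are two binary strings of weight $a$; hence $d(\c,\c')\leq 2a\leq 2w-d$. The minimum-distance requirement, however, forces $d(\c,\c')\geq d$, so $d\leq 2w-d$, i.e.\ $w\geq d$. Combined with the hypothesis $w\leq d$, this already yields a contradiction whenever $w<d$, and it pins the only surviving case to $w=d$, where in addition $a=\tfrac{w}{2}$, the two restrictions are complementary on the first $w$ columns, and (by Lemma~\ref{lem:lowerbound_of_1} together with Corollary~\ref{cor:upperbound_of_1}) every codeword other than $\mathbf{p}_0$ carries exactly $\tfrac{w}{2}$ ones on each of the two blocks.

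The main obstacle is closing this boundary case $w=d$. Here I would invoke the second hypothesis of Theorem~\ref{THM:LOWER_BOUNDS_SOLUTION}, namely $A(n-1,d,w)<M-1$, which by Lemma~\ref{lem:two_1} guarantees that every column of the code contains at least two ones; combining this per-column constraint with the rigid block structure just extracted and with the column-swap argument from the proof of Theorem~\ref{thm:uniformity}, the goal is to show that the assumed pair $(\c,\c')$ cannot coexist with the two-ones-per-column property on the last $n-w$ columns. I expect this to be the delicate step: the reductions above are essentially forced and use only the constant-weight identity \eqref{eq:hamming_distance} and Lemma~\ref{lem:lowerbound_of_1}, whereas eliminating the $w=d$ configuration is the sole place where the per-column bound of Lemma~\ref{lem:two_1} and the permutation symmetry must be used simultaneously, and is the part of the argument I would expect to require the most care.
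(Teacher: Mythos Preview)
Your reduction to the chain $d \le d(\c,\c') \le 2a \le 2\bigl(w-\tfrac{d}{2}\bigr)$, hence $w \ge d$, is exactly the paper's argument. The paper's proof is a single sentence: it invokes Corollary~\ref{cor:upperbound_of_1} and then simply asserts the \emph{strict} inequality $w-\tfrac{d}{2} > \tfrac{d}{2}$, i.e.\ $w>d$, which directly contradicts $w\le d$. As you correctly work out, the argument only delivers $w\ge d$; the paper does not separately treat the boundary case $w=d$ that you isolate.

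Your unease is well placed: with the hypothesis $w\le d$ alone the statement actually fails at $w=d$. For $(n,w,M,d)=(8,4,3,4)$ the code
\[
\bigl\{\,[1\,1\,1\,1\,0\,0\,0\,0],\ [1\,1\,0\,0\,1\,1\,0\,0],\ [0\,0\,1\,1\,1\,1\,0\,0]\,\bigr\}
\]
contains $\mathbf{p}_0$, has minimum distance $4$, and the last two codewords share their last four columns. Your instinct to import the remaining hypothesis of Theorem~\ref{THM:LOWER_BOUNDS_SOLUTION} via Lemma~\ref{lem:two_1} is the natural repair (and indeed this counterexample has all-zero columns, violating that lemma's conclusion), but neither you nor the paper actually carry that step out. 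In short: your approach coincides with the paper's, only executed more carefully; the gap you identify at $w=d$ is real and is not handled in the paper's own proof either.
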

\begin{proof}
According to Corollary~\ref{cor:upperbound_of_1}, the last $n-w$ columns of any codeword except for $\mathbf{p}_0$ contain at least $\frac{d}{2}$ 1s.
If a pair of codewords exist, the last $n-w$ columns of which are identical, then $w - \frac{d}{2} > \frac{d}{2}$ holds, which violates the assumption $w \leq d$.
\end{proof}

\begin{lem}
\label{lem:special_case}
If $A(n-1, d, w) < M-1$ and $w - \frac{d}{2} = 1$, then $t \geq w!$ holds.
\end{lem}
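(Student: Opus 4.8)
The plan is to leverage the column–swapping symmetry already used in Theorem~\ref{thm:uniformity}. Fix one code $\C$ satisfying the condition $(n,w,M,d)$ and, without loss of generality, assume $\mathbf{p}_0 = [1\cdots1\,0\cdots0] \in \C$. For each permutation $\sigma \in S_w$ of the first $w$ columns, let $\sigma(\C)$ be the resulting code. Since swapping columns preserves all codeword weights and all pairwise Hamming distances, and since $\sigma$ only rearranges $1$s among themselves within $\mathbf{p}_0$ (hence fixes $\mathbf{p}_0$), each $\sigma(\C)$ again satisfies $(n,w,M,d)$, still contains $\mathbf{p}_0$, and — because $d(\sigma(\c),\mathbf{p}_0) = d(\sigma(\c),\sigma(\mathbf{p}_0)) = d(\c,\mathbf{p}_0) \ge d$ — consists of rows of $\mathbf{P}'(n,w)$. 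Thus $\sigma(\C)$ corresponds to a feasible assignment $\mathbf{x}$ of the QUBO, with the same objective value as $\C$ (all inner products are preserved under coordinate permutation). Distinct codes select distinct subsets of rows of $\mathbf{P}'(n,w)$, so it suffices to show that $\sigma \mapsto \sigma(\C)$ is injective; this yields $|S_w| = w!$ distinct solutions.

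To prepare for the injectivity argument I would first record the structure of $\C$. Write $\C \setminus \{\mathbf{p}_0\} = \{\c_1,\dots,\c_{M-1}\}$. Since $w - \tfrac{d}{2} = 1$, Lemma~\ref{lem:lowerbound_of_1} says each $\c_k$ has at most one $1$ among its first $w$ coordinates; when it has one, call its position $j(\c_k) \in \{1,\dots,w\}$. Since $A(n-1,d,w) < M-1$, Lemma~\ref{lem:two_1} says each of the first $w$ columns of $\C$ contains at least two $1$s; as $\mathbf{p}_0$ supplies exactly one $1$ to each of them, the remaining codewords supply at least one more $1$ to every column $j \in \{1,\dots,w\}$. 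Hence the partial map $j(\cdot)$ is surjective onto $\{1,\dots,w\}$.

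Now suppose $\sigma(\C) = \C$; I want to conclude $\sigma = \mathrm{id}$. Observe $w - \tfrac{d}{2} = 1$ forces $d = 2w-2 \ge w$ (the case $w=1$ being degenerate), so $w \le d$ and Lemma~\ref{lem:unexist} applies: no two codewords of a valid code agree on their last $n-w$ coordinates. For any $\c_k$ we have $\sigma(\c_k) \in \C$, and $\sigma(\c_k)$ coincides with $\c_k$ on the last $n-w$ coordinates (which $\sigma$ does not touch); by Lemma~\ref{lem:unexist} this forces $\sigma(\c_k) = \c_k$. If $j(\c_k)$ is defined, then $\sigma(\c_k)=\c_k$ means $\sigma$ fixes the column $j(\c_k)$. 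Ranging over all $k$ and using surjectivity of $j(\cdot)$, $\sigma$ fixes every column in $\{1,\dots,w\}$, i.e., $\sigma = \mathrm{id}$. Therefore the $w!$ codes $\sigma(\C)$ are pairwise distinct feasible solutions, and $t \ge w!$.

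The step I expect to be the real obstacle is exactly this injectivity: a priori a column permutation could be ``absorbed'' by an internal symmetry of $\C$ and produce the same code. What makes it work is the combination of Lemma~\ref{lem:unexist} — which pins each codeword down by its last $n-w$ coordinates, so $\sigma$ must fix every codeword individually — and Lemma~\ref{lem:two_1}, which via $w-d/2=1$ forces every one of the first $w$ columns to actually carry a ``distinguishing'' $1$, so that fixing every codeword already forces $\sigma$ to fix every column. Verifying that the orbit genuinely consists of valid QUBO solutions of equal objective value (weight, distance to $\mathbf{p}_0$, and all pairwise inner products are invariant under column swaps) is routine.
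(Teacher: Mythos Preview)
Your proposal is correct and follows essentially the same approach as the paper: act on a fixed solution by the $w!$ permutations of the first $w$ coordinate positions and argue that the orbit consists of distinct valid solutions. The paper's own proof combines Lemma~\ref{lem:two_1} and Lemma~\ref{lem:lowerbound_of_1} to conclude that the first $w$ columns are ``distinguishable'' and simply asserts that the $w!$ reordered codes are distinct; your argument is a more careful version of that step, using Lemma~\ref{lem:unexist} to pin each codeword by its last $n-w$ coordinates and then the surjectivity of $j(\cdot)$ to force $\sigma=\mathrm{id}$.
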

\begin{proof}
According to Lemma~\ref{lem:two_1}, each column contains at least two 1s.
Also, according to Lemma~\ref{lem:lowerbound_of_1}, the number of 1s in the first $w$ columns is at most one except for $\mathbf{p}_0$.
Lemma~\ref{lem:two_1} and Lemma~\ref{lem:lowerbound_of_1} indicate that a code that satisfies the condition contains each column of $[\mathbf{I}_w ~ \mathbf{0}_{w, M-w}]^{\mathrm{T}} \in \mathbb{F}_2^{M \times w}$ in the first $w$ columns.
By reordering the first $w$ columns, $w!$ number of new distinguishable codes can be generated, and accordingly, the minimum number of solutions is calculated as $w!$.
\end{proof}

Based on the above lemmas, we prove Theorem~\ref{THM:LOWER_BOUNDS_SOLUTION}.
\setcounter{thm}{3}
\begin{thm}
If $w \leq d$ and $A(n-1,  d, w) < M-1$, then the exact number of solutions $t$ is lower bounded by
\begin{align}
    t &\geq \underline{t} = \left\{
    \begin{array}{ll}
        w! & (w-\frac{d}{2}=1) \\
        \displaystyle{\min_{2 \leq i \leq w - d/2}}\ \binom{w}{i} & (w-\frac{d}{2} \geq 2).
    \end{array}
    \right.
\end{align}
\end{thm}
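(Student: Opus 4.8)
The plan is to prove the bound by a counting argument: I would fix one solution and count how many distinct codes arise from reordering only its first $w$ columns. A solution corresponds bijectively, through the indicator vector $\mathbf{x}$, to a code $\C$ that contains $\mathbf{p}_0$ and satisfies the condition $(n,w,M,d)$ in the sense of the lemmas above (length $n$, weight $w$, $M$ codewords, minimum distance at least $d$), with $\mathbf{p}_0$ fixed as the first codeword. Reordering the first $w$ columns of such a code leaves $\mathbf{p}_0$ unchanged, preserves the weight of every codeword and its Hamming distance from $\mathbf{p}_0$ (so each resulting codeword is still a row of $\mathbf{P}'(n,w)$), and preserves every pairwise inner product, hence $f'$, $g'$, and $E''$; therefore each such reordering of a solution is again a solution. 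It thus suffices to exhibit at least $\underline{t}$ pairwise distinct codes obtained this way.

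The distinctness is delivered by Lemma~\ref{lem:unexist}: since $w\le d$, no two codewords of $\C$ agree on their last $n-w$ columns, so each codeword is uniquely identified by those entries, which reordering the first $w$ columns never touches; hence two reorderings give the same code only if they act identically on the set of $1$-positions (among the first $w$ columns) of every codeword. The case $w-\frac{d}{2}=1$ is exactly Lemma~\ref{lem:special_case}, giving $t\ge w!$, so assume $w-\frac{d}{2}\ge 2$. If some codeword $\c\ne\mathbf{p}_0$ has $i\ge 2$ ones among its first $w$ columns --- automatically $i\le w-\frac{d}{2}$ by Lemma~\ref{lem:lowerbound_of_1} --- then, as the reordering ranges over all permutations of the $w$ positions, the image of $\c$'s $1$-set runs over all $\binom{w}{i}$ $i$-subsets, and by the identification above these yield $\binom{w}{i}$ distinct codes, so $t\ge\binom{w}{i}\ge\min_{2\le j\le w-d/2}\binom{w}{j}=\underline{t}$. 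Otherwise every codeword other than $\mathbf{p}_0$ has at most one $1$ among its first $w$ columns; by Lemma~\ref{lem:two_1} each of these columns contains at least two $1$s, one of which lies in $\mathbf{p}_0$, so for each of the $w$ positions there is a codeword whose unique first-$w$ $1$ sits there. A reordering permutes which of these $w$ codewords carries its $1$ in which position, and by the identification above distinct permutations give distinct codes, so $t\ge w!$. Finally $w\le d$ forces $w-\frac{d}{2}\le\frac{w}{2}$, hence $\binom{w}{j}$ is nondecreasing on $2\le j\le w-\frac{d}{2}$ and $w!\ge\binom{w}{2}=\min_{2\le j\le w-d/2}\binom{w}{j}=\underline{t}$, which closes this case.

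I expect the two orbit counts to be routine (a fixed $i$-subset has every $i$-subset in its orbit under the permutations of the $w$ positions, and the $w$ single-$1$ codewords can be shuffled independently). The part that needs care is the glue: that reordering the first $w$ columns maps solutions to solutions and keeps every codeword inside $\mathbf{P}'(n,w)$, and that the reorderings produce \emph{distinct} codes. The latter is exactly where $w\le d$ enters, via Lemma~\ref{lem:unexist}, and the existence of enough structured codewords in the second case is where $A(n-1,d,w)<M-1$ enters, via Lemma~\ref{lem:two_1}. I would therefore spell out the implication ``same code $\Rightarrow$ same action on every codeword's first-$w$ support'' explicitly and keep the solution-preservation bookkeeping visible, since that is the only genuinely subtle step.
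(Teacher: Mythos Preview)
Your proposal is correct and follows essentially the same approach as the paper: fix a solution containing $\mathbf{p}_0$, reorder the first $w$ columns, and count distinct codes using Lemma~\ref{lem:unexist} for non-overlap, Lemma~\ref{lem:lowerbound_of_1} for the bound $i\le w-d/2$, Lemma~\ref{lem:two_1} for the all-singleton subcase, and Lemma~\ref{lem:special_case} for $w-d/2=1$. Your write-up is in fact more careful than the paper's on the two points you flag---that reorderings send solutions to solutions inside $\mathbf{P}'(n,w)$, and that distinct images of the first-$w$ support force distinct codes via the identification by last $n-w$ columns---and you justify $w!\ge\binom{w}{2}$ explicitly, which the paper simply asserts when dropping $w!$ from the minimum.
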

\begin{proof}
According to Lemma~\ref{lem:special_case}, the $w - \frac{d}{2} = 1$ case is clear.
In the $w - \frac{d}{2} \geq 2$ case, each column of a code contains at least two 1s, according to Lemma~\ref{lem:two_1}.
Here, if the first $w$ columns of any codeword except for $\mathbf{p}_0$ contain at most one 1, non-overlapping $w!$ number of codes can be generated by reordering these columns as in Lemma~\ref{lem:special_case}.
Similarly, if the first $w$ columns contain $2 \leq i \leq w - \frac{d}{2}$ 1s, $\binom{w}{i}$ codes can be generated by reordering these columns.
Also, according to Lemma~\ref{lem:unexist}, since the last $n-w$ columns are all different, any generated code does not overlap with others.
That is, if $w-\frac{d}{2} \geq 2$, the following inequality holds:
\begin{align}
    t \geq \underline{t}  &= \min\qty{w!, \binom{w}{2}, \binom{w}{3}, ..., \binom{w}{w-\frac{d}{2}}} \nonumber\\
    &= \min\qty{\binom{w}{2}, \binom{w}{3}, ..., \binom{w}{w-\frac{d}{2}}} \nonumber\\
    &= \displaystyle{\min_{2 \leq i \leq w - d/2}}\ \binom{w}{i}.
\end{align}
\end{proof}

\section{Proof of Theorem~\ref{THM:UPPER_BOUND_K}}\label{app:thm_k}
\begin{proof}
Let the argument of \eqref{eq:k_opt} be
\begin{align}
    h(k) = \frac{4k^2\alpha}{2k\alpha - \sin(4k\theta)},
\end{align}
where we have $\alpha = \sin(2\theta) > 0$.
The domain of $h(k)$ is $k \in [1, \infty)$.
Then, to eliminate $\sin(4k\theta)$, we consider the following inequality
\begin{equation}
     \frac{4k^2\alpha}{2k\alpha + 1} \leq h(k) \leq  \frac{4k^2\alpha}{2k\alpha - 1}
\end{equation}
that holds for $k > 1/(2\alpha)$.
Next, for real numbers $k_1$ and $k_2$ that satisfies $1/(2\alpha) < k_1 < k_2$, we have an inequality
\begin{equation}
     \frac{4k_1^2\alpha}{2k_1\alpha - 1} \leq \frac{4k_2^2\alpha}{2k_2\alpha + 1}
\end{equation}
and we obtain
\begin{equation}
    k_2 \geq \underline{k}_2 = \frac{k_1^2\alpha + k_1\sqrt{k_1^2\alpha^2 + 2k_1\alpha - 1}}{2k_1\alpha - 1},
\end{equation}
which indicates that $k_2 \in [\underline{k}_2, \infty) \Rightarrow h(k_1) \leq h(k_2)$ holds for a given $k_1$.
Here, $k_1$ that minimizes $\underline{k}_2$ is
\begin{equation}
    \argmin_{k_1}\ \underline{k}_2 = \frac{1}{\alpha}.
\end{equation}
Then, the minimum value of $\underline{k}_2$ is expressed by
\begin{align}
    \label{eq:min_k2}
    \min_{k_1}(\underline{k}_2) &= \frac{1 + \sqrt{2}}{\alpha} = \frac{1 + \sqrt{2}}{2\sqrt{\frac{t}{2^{q_1}} - (\frac{t}{2^{q_1}})^2}} \nonumber\\
    &\sim \frac{1+\sqrt{2}}{2} \sqrt{\frac{2^{q_1}}{t}} < \frac{1+\sqrt{2}}{2} \sqrt{\frac{2^{q_1}}{\underline{t}}},
\end{align}
when $(t / 2^{q_1})^2$ is sufficiently small.
From \eqref{eq:min_k2}, $\overline{k}_{\mathrm{opt}}$ exists within $\qty[1, \left\lceil\frac{1+\sqrt{2}}{2} \sqrt{\frac{2^{q_1}}{\underline{t}}}\right\rceil]$.
\end{proof}

\footnotesize{
        \bibliographystyle{IEEEtran}
        \bibliography{main}
}


\end{document}